\documentclass[12pt]{amsart}
\usepackage[T1]{fontenc}
\usepackage[a4paper,margin=1in]{geometry}
\usepackage{graphicx}
\usepackage{amssymb}
\usepackage{stmaryrd}
\usepackage{enumitem}
\usepackage{tikz}
\usepackage{float}
\usepackage{url}
\usepackage{mathtools}
\usepackage{amsmath,amsfonts,amsthm}
\usepackage{mathrsfs}

\usepackage{tikz}
\usepackage{tikz-cd}
\usetikzlibrary{arrows.meta, positioning, calc, fit, backgrounds}

\usepackage{float}

\newtheorem{theorem}{Theorem}[section]
\newtheorem{lemma}[theorem]{Lemma}
\newtheorem{proposition}[theorem]{Proposition}
\newtheorem{corollary}[theorem]{Corollary}

\newtheorem{definition}{Definition}
\usepackage{multirow}
 
\newcommand{\Hom}{\mathrm{Hom}}
\newcommand{\End}{\mathrm{End}}
\newcommand{\Atm}{\mathrm{Atm}}

\newcommand{\Aut}{\mathrm{Aut}}
\newcommand{\id}{\mathrm{id}}
\newcommand{\ppi}{\mathbb{P}_s}

\usepackage[colorlinks,
linkcolor=blue,
anchorcolor=blue,
citecolor=blue
]{hyperref}

\author{J. Hamoud}
\address{\textbf{Jasem Hamoud:} Department of Discrete Mathematics, Moscow Institute of Physics and Technology}
\email{hamoud.math@gmail.com}
\thanks{Corresponding author. J. Hamoud (hamoud.math@gmail.com)}

\title[On Epimorphisms of Hypergraphic Automata]{On Epimorphisms of Hypergraphic Automata and Input Symbol Semigroups}

\date{}
\begin{document}

\begin{abstract}
Hypergraphic automata are automata whose state sets and output symbol sets are
hypergraphs invariant under the actions of the transition and output functions.
Universally attracting objects in the category of such automata are called
universal hypergraphic automata; their semigroups of input symbols are algebras
of mappings whose properties are tightly linked to the algebraic structure of
the automata themselves. This paper establishes a complete characterisation of
epimorphisms of universal hypergraphic automata and of their semigroups of
input symbols. A central contribution is the introduction of two distinct
notions of epimorphism for hypergraphs including weak, strong and the proof
that these notions diverge in general but necessarily coincide for the
important subclass of $p^*$-hypergraphs, which includes automata whose state
hypergraphs and output hypergraphs are projective or affine planes. The main
results give necessary and sufficient conditions for a triple $(f, \mathbb{P}_s, g)$
to be an epimorphism of universal hypergraphic automata, expressed in terms of
the component maps on the state and output hypergraphs.
\end{abstract}

\maketitle

\noindent\rule{15.9cm}{1.0pt}

\noindent
\textbf{
Keywords:} hypergraphic automaton, epimorphism, semigroup,
effective hypergraph, $p$-definable edges, $p$-hypergraph, projective plane.

\medskip

\noindent
{\bf
MSC 2020:} 08A35, 08A70.

\medskip

\noindent
{\bf
UDC:} 519.713+512.532+514.174

\noindent\rule{15.9cm}{1.0pt}

\section{Introduction}
The composition of mappings had presented by B.~Plotkin~\cite{Plotkin1992}, $f\colon A \to B$ and $\varphi\colon B \to C$ is defined by $a(f\varphi)=(af)\varphi,$ 
for every $a\in A$.
The identity mapping on a set $A$ is the mapping $\varepsilon_A\colon A \to A$ defined by $\varepsilon_A(a)=a$. A mapping $f^{-1}\colon B \to A$ is called the inverse of a mapping $f\colon A \to B$ if $ff^{-1}=\varepsilon_A$ and $f^{-1}f=\varepsilon_B.$ The set of all mappings from $A$ to $B$ is denoted by $\mathrm{Fun}(A,B)=B^A.$ A \emph{transformation} of the set $A$ is any mapping $A\to A$. The set of all transformations of $A$ is denoted by $S_A$. A bijective transformation of $A$ is called a \emph{substitution} (or permutation) of $A$.

A probabilistic automaton (PA) generalises a nondeterministic finite automaton by
annotating each transition with a probability, thereby replacing the binary transition
relation with a stochastic matrix. With a fixed cut-point $\lambda \in [0,1)$, a PA
accepts a word if the probability of reaching an accepting state exceeds $\lambda$.
Although the model was introduced by Rabin in the 1960s, its algorithmic properties
continue to attract active research, owing to their close ties to Markov chains and
formal verification.

Fijalkow, Riveros, and Worrell~\cite{Fijalkow2022} study \emph{probabilistic automata
of bounded ambiguity}, showing that the emptiness problem (and its value-problem variant)
becomes tractable when the degree of nondeterminism is structurally restricted. Their
reduction to a stochastic path problem yields efficient algorithms for finitely ambiguous
PAs, providing a sharp boundary between decidable and undecidable fragments that had
previously been characterised only through infinite-ambiguity constructions.

The interplay between nondeterminism and randomness was further clarified by Henzinger,
Prakash, and Thejaswini~\cite{Henzinger2025}, who systematically investigate what happens
when nondeterministic choices in automata are resolved by random coin flips. Their work,
presented at MFCS 2025, shows that for several verification-relevant objectives, resolving
nondeterminism by chance yields decidable problems where pure nondeterminism leads to
undecidability. The probabilistic model-checking perspective connects PAs to Markov decision processes
(MDPs) and Markov chains. Hensel, Junges, Katoen, Quatmann, and Volk~\cite{Hensel2022}
describe the \textsc{Storm} probabilistic model checker, a mature tool supporting a wide
range of quantitative formalisms. Their survey reviews the state of the art in algorithmic
techniques for reachability, reward analysis, and multi-objective verification in Markovian
systems, and identifies open challenges in scalability and parameter synthesis.

In a complementary direction, Chiari, Mandrioli, Pontiggia, and Pradella~\cite{Chiari2024}
extend probabilistic pushdown automata (PPDA) to the operator precedence setting, deriving
model-checking algorithms for systems with structured recursive behaviour. Their
POPACheck tool demonstrates that properties expressible in probabilistic temporal logic
can be verified efficiently for languages beyond the context-free frontier.

Geissler and Winkler~\cite{Geissler2025} connect probabilistic automata to the semantics
of discrete probabilistic programs, showing that weighted automata enable exact inference
in a broad class of probabilistic programs. Their approach, presented at ICTAC 2025,
compiles program executions into weighted automata whose sum semantics computes the
exact probability of observable events, providing a bridge between automata-theoretic
and programming-language perspectives on probability.

Actually, according to B.~Plotkin~\cite{Plotkin1992}, an automaton is a multi-sorted
algebraic system $A = (X, S, Y, \delta, \lambda)$ consisting of a state set
$X$, a set of output symbols $Y$, a semigroup of input symbols $S$, a
transition function $\delta\colon X \times S \to X$, and an output function
$\lambda\colon X \times S \to Y$. Depending on the requirements of particular
applications, the sets $X$ and $Y$ may carry the structure of objects in some
category $\mathbf{K}$ of algebraic systems. Well-known instances include
probability automata, linear automata, topological automata, and ordered
automata (see~\cite{Dennunzio2021, Salo2020, Dennunzio2020ICALP}).

Universally attracting objects in the category of automata structured over
$\mathbf{K}$ are the universal automata $\Atm(A, B)$ with state system
$A \in \mathbf{K}$, output system $B \in \mathbf{K}$, and input symbol
semigroup $S = \End A \times \Hom(A, B)$ (see~\cite{Plotkin1992}).

Throughout the present paper, we study hypergraphic automata structured in the category
$\mathbf{Hgr}$ of effective hypergraphs with $p$-definable edges. This is a
broad and important class: it contains automata whose state hypergraphs and
output hypergraphs are projective or affine planes, as well as automata whose
hypergraphs are partitions of their vertex sets into non-singleton equivalence
classes. For such automata, the universal hypergraphic automaton
$\Atm(H_X, H_Y)$ has state hypergraph $H_X$, output hypergraph $H_Y$, and
input symbol semigroup
$S(H_X, H_Y) = \End H_X \times \Hom(H_X, H_Y)$.

Prior work on this class of automata includes a proof that universal
hypergraphic automata are determined up to isomorphism by their input symbol
semigroups~\cite{Molchanov2019}, a concrete characterisation of
isomorphisms~\cite{Khvorostukhina2017}, and, most recently, a complete
description of monomorphisms~\cite{Khvorostukhina2025}.
\section{Preliminaries}

We recall the necessary definitions from the theories of
semigroups~\cite{Malcev1973,Lallement1979}, automata~\cite{Plotkin1992},
relations~\cite{Vagner1965}, and hypergraphs~\cite{Bretto2013,Zykov1974}.

Let $X$ and $Y$ be non-empty sets. A \emph{mapping} from $X$ to $Y$,
written $\varphi\colon X \to Y$, is a single-valued binary relation
$\varphi \subseteq X \times Y$ with $\mathrm{dom}\,\varphi = X$. A constant
mapping $\varphi\colon X \to \{a\}$ is denoted $C_a$. A mapping from $X$ to
itself is a \emph{transformation} of $X$, and the identity transformation is
denoted $\Delta_X$.

\begin{definition}
A \emph{hypergraph} is an algebraic system $H = (X, L)$ where $X$ is a
non-empty vertex set and $L$ is a family of subsets of $X$ called
\emph{edges} (or \emph{hyperedges}). A subset $Y \subseteq X$ is
\emph{bounded} if $Y \subseteq l$ for some $l \in L$, and \emph{unbounded}
otherwise. Vertices belonging to some edge are called \emph{adjacent}.
\end{definition}

\begin{definition}\label{def:effective}
A hypergraph $H = (X, L)$ is \emph{effective} if every vertex of $X$ is
incident to at least one edge.
\end{definition}

\begin{definition}\label{def:p-definable}
Let $p \in \mathbb{N}$. A hypergraph $H = (X, L)$ has \emph{$p$-definable
edges} if every edge contains at least $p+1$ vertices, and any $p$ vertices
are incident to at most one common edge.
\end{definition}

Equivalently: $|l| \geq p+1$ for all $l \in L$, and
$|l \cap l'| < p$ whenever $l \neq l'$ in $L$.

A special and important subclass is formed by $p$-hypergraphs.

\begin{definition}~\label{def0p0hypergraph}
Let $p \in \mathbb{N}$. A hypergraph $H = (X, L)$ is a \emph{$p$-hypergraph}
if the following axioms hold:
\begin{itemize}
  \item[\rm (A1)] any $p$ vertices belong to exactly one common edge;
  \item[\rm (A2)] each edge contains at least $p+1$ vertices;
  \item[\rm (A3)] there exists a $(p+1)$-element subset of $X$ that is not
        contained in any edge.
\end{itemize}
\end{definition}

Every $p$-hypergraph is an effective hypergraph with $p$-definable edges.
When planes are viewed as hypergraphs with points as vertices and lines as
edges, any projective plane and any affine plane with more than four points
are $2$-hypergraphs~\cite{Karteszi2014,Hartshorne2009}.

Let $H_X = (X, L_X)$ and $H_Y = (Y, L_Y)$ be hypergraphs. A
\emph{homomorphism} from $H_X$ to $H_Y$ is a mapping $\varphi\colon X \to Y$
such that every edge of $H_X$ maps into a bounded set of $H_Y$:
\[
  (\forall\, l \in L_X)\;(\exists\, l' \in L_Y)\;\bigl(\varphi(l) \subseteq l'\bigr).
\]
Note that any mapping $\varphi\colon X \to l$ for a fixed $l \in L_Y$ is a
homomorphism from $H_X$ to $H_Y$. The set of all homomorphisms from $H_X$ to
$H_Y$ is written $\Hom(H_X, H_Y)$.

An injective homomorphism is a \emph{monomorphism}. A bijective homomorphism
whose inverse is also a homomorphism, and which satisfies
$l \in L_X \Leftrightarrow \varphi(l) \in L_Y$, is an \emph{isomorphism}.
A homomorphism from $H$ to itself is an \emph{endomorphism}; under
composition, the set $\End H$ of all endomorphisms forms a semigroup.

We now introduce the two notions of epimorphism central to this paper.

\begin{definition}\label{def:weak-epi}
A homomorphism $g\colon H_Y \to H_{Y_1}$ is a \emph{weak epimorphism} of
hypergraphs if $g$ is surjective on vertices, i.e., $g(Y) = Y_1$.
\end{definition}

\begin{definition}\label{def:strong-epi}
A homomorphism $g\colon H_Y \to H_{Y_1}$ is a \emph{strong epimorphism} of
hypergraphs if $g$ is surjective on vertices and every edge of $H_{Y_1}$ is
the image of some edge of $H_Y$ under $g$:
\[
  g(Y) = Y_1 \quad \text{and} \quad
  (\forall\, l_1 \in L_{Y_1})\;(\exists\, l \in L_Y)\;(g(l) = l_1).
\]
\end{definition}

Every strong epimorphism is a weak epimorphism. The converse need not hold in
general.
\begin{definition}[Image Hypergraph]\label{def:image-hypergraph}
Let $g\colon H_Y \to H_{Y_1}$ be a homomorphism of hypergraphs with
$H_Y = (Y, L_Y)$. The \emph{image hypergraph} of $g$ is the hypergraph
\[
  g(H_Y) \;=\; \bigl(g(Y),\; \{g(l) : l \in L_Y\}\bigr).
\]
The vertex set of $g(H_Y)$ is the image $g(Y) \subseteq Y_1$, and its edge
family consists of all images of edges of $H_Y$ under $g$. If $g$ is a
homomorphism, then $g(H_Y)$ is a well-defined subhypergraph of $H_{Y_1}$.
\end{definition}

\begin{definition}[Kernel Congruences and Quotient Hypergraph]%
\label{def:quotient-hypergraph}
Let $g\colon H_Y \to H_{Y_1}$ be a homomorphism with $H_Y = (Y, L_Y)$. The
\emph{vertex kernel} of $g$ is the equivalence relation on $Y$ defined by
\[
  \ker_V(g) \;=\; \{(y, y') \in Y \times Y : g(y) = g(y')\},
\]
and the \emph{edge kernel} of $g$ is the equivalence relation on $L_Y$
defined by
\[
  \ker_E(g) \;=\; \{(l, l') \in L_Y \times L_Y : g(l) = g(l')\}.
\]
The \emph{quotient hypergraph} of $H_Y$ by $(\ker_V(g), \ker_E(g))$ is the
hypergraph
\[
  H_Y / g \;=\; \bigl(Y/\ker_V(g),\;\; L_Y/\ker_E(g)\bigr),
\]
whose vertices are the $\ker_V(g)$-equivalence classes $[y]$ and whose edges
are the $\ker_E(g)$-equivalence classes $[l]$, with $[y] \in [l]$ if and only
if $y \in l$. The map $g$ induces a canonical isomorphism
$H_Y/g \xrightarrow{\sim} g(H_Y)$.
\end{definition}

We will show in Section~\ref{sec03results} that, for $p$-hypergraphs, every
weak epimorphism that arises as a component of a semigroup epimorphism is
automatically a strong epimorphism.

According to B.~Plotkin~\cite{Plotkin1992}, an \emph{automaton} is an
algebraic system $\mathbf{A} = (X, S, Y, \delta, \lambda)$ where $X$ is the
state set, $S$ a semigroup of input symbols, $Y$ the output symbol set,
$\delta\colon X \times S \to X$ the transition function, and
$\lambda\colon X \times S \to Y$ the output function, satisfying for all
$x \in X$ and $a, b \in S$:
\[
  \delta(x, ab) = \delta(\delta(x,a), b), \qquad \lambda(x, ab) = \lambda(\delta(x,a), b).
\]
For each $s \in S$, define $\delta_s\colon X \to X$ and $\lambda_s\colon X \to Y$
by $\delta_s(x) = \delta(x,s)$ and $\lambda_s(x) = \lambda(x,s)$.

An automaton $\mathbf{A} = (H_X, S, H_Y, \delta, \lambda)$ is a
\emph{hypergraphic automaton} if the state set $H_X = (X, L_X)$ and the
output symbol set $H_Y = (Y, L_Y)$ are hypergraphs, and for every $s \in S$
the maps $\delta_s$ and $\lambda_s$ are homomorphisms from $H_X$ to $H_X$ and
from $H_X$ to $H_Y$, respectively. 
For hypergraphs $H_X = (X, L_X)$ and $H_Y = (Y, L_Y)$, the semigroup
$S(H_X, H_Y) = \End H_X \times \Hom(H_X, H_Y)$ carries the multiplication
$ (\varphi, \psi)(\varphi_1, \psi_1) = (\varphi\varphi_1,\, \varphi\psi_1),$
following~\cite{Plotkin1992}.

The \emph{universal hypergraphic automaton} is the algebraic system
\[
  \Atm(H_X, H_Y) = \bigl(H_X,\; S(H_X, H_Y),\; H_Y,\; \delta',\; \lambda'\bigr),
\]
where $\delta'(x,(\varphi,\psi)) = \varphi(x)$ and
$\lambda'(x,(\varphi,\psi)) = \psi(x)$ for all $x \in X$ and
$(\varphi,\psi) \in S(H_X,H_Y)$. It satisfies the universal property: for
every hypergraphic automaton $A = (H_X, S, H_Y, \delta, \lambda)$ there
exists a homomorphism $\ppi\colon S \to S(H_X, H_Y)$ such that
$\gamma = (\Delta_X, \ppi, \Delta_Y)$ is a homomorphism from $A$ to
$\Atm(H_X, H_Y)$~\cite{Plotkin1992}.

A \emph{homomorphism} from $\Atm(H_X, H_Y)$ to $\Atm(H_{X_1}, H_{Y_1})$ is
a triple $\gamma = (f, \ppi, g)$ where $f\colon H_X \to H_{X_1}$ and
$g\colon H_Y \to H_{Y_1}$ are hypergraph homomorphisms and
$\ppi\colon S(H_X, H_Y) \to S(H_{X_1}, H_{Y_1})$ is a semigroup homomorphism,
satisfying for all $x \in X$ and $s = (\varphi,\psi) \in S(H_X, H_Y)$:
\[
  f(\delta'(x, s)) = \delta'_1(f(x), \ppi(s)), \qquad
  g(\lambda'(x, s)) = \lambda'_1(f(x), \ppi(s)).
\]

\begin{definition}\label{def:epi-automaton}
A homomorphism $\gamma = (f, \ppi, g)$ from $\Atm(H_X, H_Y)$ to
$\Atm(H_{X_1}, H_{Y_1})$ is an \emph{epimorphism of automata} if all three
components are epimorphisms in their respective categories: $f$ is a surjective
homomorphism of hypergraphs, $\ppi$ is a surjective semigroup homomorphism, and
$g$ is a surjective homomorphism of hypergraphs.
\end{definition}

\begin{definition}\label{def:strong-epi-automaton}
A homomorphism $\gamma = (f, \ppi, g)$ is a \emph{strong epimorphism of
automata} if $f$ is a surjective hypergraph homomorphism, $\ppi$ is a
surjective semigroup homomorphism, and $g$ is a strong epimorphism of
hypergraphs.
\end{definition}

We also recall the standard induced maps. Given mappings $f\colon X \to X_1$
and $g\colon Y \to Y_1$, define:
\[
  f^2(\varphi) = f \circ \varphi \circ f^{-1}, \qquad
  (f \times g)(\psi) = g \circ \psi \circ f^{-1},
\]
for $\varphi \in \End H_X$ and $\psi \in \Hom(H_X, H_Y)$, whenever $f$ is
bijective. The map $\ppi = (f^2, f \times g)$ is then defined by
$\ppi(\varphi, \psi) = (f^2(\varphi),\, (f\times g)(\psi))$.  Fix an initial state $x_0 \in X$ and
a non-empty set $F \subseteq Y$ of \emph{accepting output symbols}. The
\emph{language recognised} by $\Atm(H_X, H_Y)$ with respect to $(x_0, F)$
is the set
\[
  \mathcal{L}(x_0, F) \;=\;
  \bigl\{(\varphi,\psi) \in S(H_X, H_Y) :
  \lambda'\bigl(x_0,\,(\varphi,\psi)\bigr) \in F\bigr\}
  \;\subseteq\; S(H_X, H_Y).
\]
A subset $\mathcal{L} \subseteq S(H_X, H_Y)$ is \emph{recognisable} if it
equals $\mathcal{L}(x_0, F)$ for some choice of $x_0 \in X$ and
$F \subseteq Y$. The family of all recognisable languages of $\Atm(H_X, H_Y)$
is closed under finite unions, finite intersections, and complementation within
$S(H_X, H_Y)$~\cite{Eilenberg1974}.

\begin{lemma}[{\cite{Molchanov2019}}]\label{lemzerosidentities}
Let $H_X = (X, L_X)$ and $H_Y = (Y, L_Y)$ be effective hypergraphs with
$p$-definable edges. Then:
\begin{enumerate}
  \item An element $z \in S(H_X, H_Y)$ is a right zero of $S(H_X, H_Y)$ if
        and only if $z = (C_a, C_b)$ for some $a \in X$, $b \in Y$, where
        $C_a$ and $C_b$ are constant mappings.
  \item An element $i \in S(H_X, H_Y)$ is a left identity of $S(H_X, H_Y)$
        if and only if $i = (\Delta_X, \psi)$ for some
        $\psi \in \Hom(H_X, H_Y)$.
\end{enumerate}
\end{lemma}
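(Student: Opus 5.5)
The plan is to work directly from the multiplication rule $(\varphi,\psi)(\varphi_1,\psi_1)=(\varphi\varphi_1,\varphi\psi_1)$. Composition is read left to right, as in the Introduction: $x(\varphi\varphi_1)=(x\varphi)\varphi_1$. The whole argument reduces to testing against cleverly chosen elements of $S(H_X,H_Y)$, and the only hypergraph input needed is effectiveness. I do not expect to need $p$-definability.

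\emph{Preliminary step.} First I check that constant maps are available as test elements. Take $a\in X$. Since $H_X$ is effective, $a$ lies in some edge $l_a\in L_X$. For every $l\in L_X$ we then have $C_a(l)=\{a\}\subseteq l_a$, so $C_a\in\End H_X$. In the same way, effectiveness of $H_Y$ gives $C_b\in\Hom(H_X,H_Y)$ for every $b\in Y$. In particular $\Hom(H_X,H_Y)\neq\emptyset$, and $\Delta_X\in\End H_X$ trivially.

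\emph{Part (1).} An element $z=(\varphi_1,\psi_1)$ is a right zero exactly when $sz=z$ for all $s\in S(H_X,H_Y)$, that is, $\varphi\varphi_1=\varphi_1$ and $\varphi\psi_1=\psi_1$ for every $\varphi\in\End H_X$.

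For sufficiency, suppose $\varphi_1=C_a$ and $\psi_1=C_b$. Applying anything first and then a constant map yields that same constant, so $\varphi C_a=C_a$ and $\varphi C_b=C_b$.

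For necessity, I test with $s=(C_x,C_b)$ for an arbitrary $x\in X$. This gives $C_x\varphi_1=C_{x\varphi_1}=\varphi_1$ and $C_x\psi_1=C_{x\psi_1}=\psi_1$. Hence $\varphi_1$ and $\psi_1$ are constant maps, so $z=(C_a,C_b)$.

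\emph{Part (2).} An element $i=(\varphi,\psi)$ is a left identity exactly when $\varphi\varphi_1=\varphi_1$ and $\varphi\psi_1=\psi_1$ for all $(\varphi_1,\psi_1)$.

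For sufficiency, if $\varphi=\Delta_X$ both equalities are immediate, for any $\psi\in\Hom(H_X,H_Y)$.

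For necessity, I choose $s=(\Delta_X,\psi_1)$ with any $\psi_1\in\Hom(H_X,H_Y)$; such a $\psi_1$ exists by the preliminary step. The first coordinate then gives $\varphi\Delta_X=\Delta_X$, that is, $\varphi=\Delta_X$.

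The only real care points are keeping the composition convention straight and justifying, via effectiveness, that the test elements $C_x$ and $(\Delta_X,\psi_1)$ actually lie in $S(H_X,H_Y)$. I expect the latter to be the main (mild) obstacle.
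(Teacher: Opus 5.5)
Your proof is correct. The paper gives no proof of this lemma: it is imported from \cite{Molchanov2019}, so there is no in-paper argument to compare against. Your direct verification from the rule $(\varphi,\psi)(\varphi_1,\psi_1)=(\varphi\varphi_1,\varphi\psi_1)$ therefore makes the paper self-contained on this point.

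Your reading of juxtaposition as left-to-right composition is the only one that type-checks, since $\varphi\psi_1$ must be a map $X\to Y$. It is also the reading forced by the automaton axiom $\delta(x,ab)=\delta(\delta(x,a),b)$ with $\delta'(x,(\varphi,\psi))=\varphi(x)$. Under this convention your test elements work as claimed:
\begin{itemize}
\item $(C_x,C_b)$ gives $C_x\varphi_1=C_{x\varphi_1}$, which forces both components of a right zero to be constant;
\item $(\Delta_X,\psi_1)$ gives $\varphi\Delta_X=\varphi$, which forces the first component of a left identity to be $\Delta_X$.
\end{itemize}

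Your argument also shows the stated hypotheses are stronger than needed. $p$-definability plays no role. Effectiveness is used only to guarantee two things: that every pair $(C_a,C_b)$ with $a\in X$, $b\in Y$ actually lies in $S(H_X,H_Y)$, and that a constant endomorphism is available as a test element. The first point matters because the paper later relies on the right zeros being exactly all such pairs, for instance in Cases~1 and~3 of Theorem~\ref{thm:main-general}.
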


\subsection{Problem Statement}
The present paper addresses the following natural problem: describe completely the
structure of epimorphisms of universal hypergraphic automata $\mathrm{Atm}(H_X, H_Y)$
and of their semigroups of input symbols $S(H_X, H_Y)$, for effective hypergraphs
with $p$-definable edges. More precisely, given two such automata
$\mathrm{Atm}(H_X, H_Y)$ and $\mathrm{Atm}(H_{X_1}, H_{Y_1})$ and mappings
$f: X \to X_1$, $g: Y \to Y_1$, we seek necessary and sufficient conditions on
$f$ and $g$ under which the induced triple $\gamma = (f, \ppi, g)$ is an
epimorphism of automata, and correspondingly under which $\ppi = (f^2, f \times g)$
is an epimorphism of their input symbol semigroups. The problem is non-trivial
because, unlike the monomorphism case, the epimorphism condition on the semigroup
level does not impose a single uniform requirement on the output map $g$: as we
show, the precise condition depends on the geometric structure of the hypergraphs
involved, leading naturally to a distinction between two notions of epimorphism
for hypergraphs that we call weak and strong. 
The present paper
continues this programme by studying \emph{epimorphisms}.

A new and distinctive feature of the epimorphism problem is that it
necessitates a careful distinction between two notions of epimorphism for
hypergraphs.

\section{Main Results}\label{sec03results}
In this section, Lemma~\ref{lem01sufficient} provides a sufficient condition for a triple $(f, \ppi, g)$ to
yield an epimorphism, playing a role dual to \textit{Lemma~2 of~\cite{Khvorostukhina2025}}.
Let
$\ppi\colon S(H_X, H_Y) \to S(H_{X_1}, H_{Y_1})$ be defined by
$\ppi(\varphi, \psi) = (f^2(\varphi),\, (f\times g)(\psi))$.
\begin{lemma}\label{lem01sufficient}
Let $H_X, H_Y, H_{X_1}, H_{Y_1}$ be effective hypergraphs with $p$-definable
edges. Consider $f\colon H_X \to H_{X_1}$ be an isomorphism and let
$g\colon H_Y \to H_{Y_1}$ be a weak epimorphism.  Then,
\begin{enumerate}
  \item $\ppi$ is an epimorphism from $S(H_X, H_Y)$ to $S(H_{X_1}, H_{Y_1})$;
  \item The triple $\gamma = (f, \ppi, g)$ is an epimorphism from
        $\Atm(H_X, H_Y)$ to $\Atm(H_{X_1}, H_{Y_1})$.
\end{enumerate}
\end{lemma}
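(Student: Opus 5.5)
The plan is to verify directly that $\ppi=(f^2,f\times g)$ is well defined, that it is a semigroup homomorphism, and that it is surjective. Part (2) then follows from part (1) together with a routine check of the two automaton identities. Throughout, I use the functional convention $f^2(\varphi)=f\circ\varphi\circ f^{-1}$ and $(f\times g)(\psi)=g\circ\psi\circ f^{-1}$, which is legitimate since $f$ is bijective.

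\emph{Well-definedness and multiplicativity.} Since $f$ is an isomorphism, both $f$ and $f^{-1}$ send edges to edges, so they send edges into bounded sets. Hence $f^2(\varphi)=f\varphi f^{-1}$ is a composite of hypergraph homomorphisms and lies in $\End H_{X_1}$. Likewise $g\psi f^{-1}$ is a composite of homomorphisms, so it lies in $\Hom(H_{X_1},H_{Y_1})$. Multiplicativity is a cancellation of $f^{-1}f$ in the middle. For the first coordinate, $f^2(\varphi)f^2(\varphi_1)=f^2(\varphi\varphi_1)$. For the second coordinate, $f^2(\varphi)\,(f\times g)(\psi_1)=(f\times g)(\varphi\psi_1)$, read in whichever composition order the product $(\varphi,\psi)(\varphi_1,\psi_1)=(\varphi\varphi_1,\varphi\psi_1)$ encodes. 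The automaton identities are immediate:
\[
f(\varphi(x))=(f\varphi f^{-1})(f(x)),\qquad g(\psi(x))=(g\psi f^{-1})(f(x)).
\]
Surjectivity of $f$ is given, surjectivity of $g$ is the weak-epimorphism hypothesis, and surjectivity of $\ppi$ is part (1). So (2) reduces to (1).

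\emph{Surjectivity of $\ppi$.} Fix $(\varphi_1,\psi_1)\in S(H_{X_1},H_{Y_1})$. The first coordinate is easy: $\varphi=f^{-1}\varphi_1 f\in\End H_X$ and $f^2(\varphi)=\varphi_1$. Because the two coordinates of $\ppi$ are independent, it remains to find $\psi\in\Hom(H_X,H_Y)$ with
\[
g\circ\psi=\psi_1\circ f .
\]
The natural choice is a section: for each $x$, let $\psi(x)$ be some preimage under $g$ of $\psi_1(f(x))$, which exists because $g$ is surjective.

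\emph{Main obstacle.} The difficulty is to make $\psi$ a homomorphism. For an edge $l\in L_X$, the set $\psi_1 f(l)$ lies in some edge $l_1\in L_{Y_1}$, but the chosen preimages must all fall inside a single edge of $H_Y$. A weak epimorphism does not guarantee that $g^{-1}(l_1)$ is bounded, so an arbitrary section may fail.

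My first attempt is to choose the lifts edge by edge. For each edge $l$ of $H_X$, use $p$-definability to determine the edge $l_1\supseteq\psi_1f(l)$, which is unique whenever $|\psi_1f(l)|\ge p$. Then try to pick an edge $m\in L_Y$ with $g(m)\cap l_1\supseteq\psi_1 f(l)$ and lift inside $m$. The lifts must agree on overlaps $l\cap l'$, which have fewer than $p$ points. Effectiveness gives an edge through every vertex, and I would use it to start the construction and patch these overlaps.

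If this fails in general, I would instead prove the stronger fact that $g$ admits a homomorphic section $s\colon H_{Y_1}\to H_Y$ with $gs=\Delta_{Y_1}$, and take $\psi=s\psi_1 f$. Alternatively, I would record exactly the extra hypothesis on $g$ needed for such a lift, for example that every edge of $H_{Y_1}$ is covered by the image of an edge of $H_Y$. This is the step whose validity I am least sure of, and I expect it to be the crux of the lemma.
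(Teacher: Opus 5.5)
Your checks of well-definedness, multiplicativity, the two automaton identities and surjectivity in the first coordinate are fine. You also located the crux correctly: everything hinges on lifting an arbitrary $\psi_1\in\Hom(H_{X_1},H_{Y_1})$ to some $\psi\in\Hom(H_X,H_Y)$ with $g\circ\psi=\psi_1\circ f$. This step is not merely unproved; it is false under the stated hypotheses, so no patching argument can close it.

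Here is a counterexample for any $p\ge 1$. Let $H_Y$ have vertex set $A\cup B$ and exactly two edges $A,B$, disjoint, with $|A|=|B|=p+1$. Let $H_{Y_1}$ have a single edge $l_1$ with $|l_1|=p+1$, equal to its vertex set. Fix $u\in l_1$ and let $g$ send all of $A$ to $u$ and map $B$ onto $l_1\setminus\{u\}$. Take any effective $H_X$ with $p$-definable edges, $H_{X_1}=H_X$ and $f=\Delta_X$. All hypotheses hold, and $g$ is a surjective homomorphism because $g(A),g(B)\subseteq l_1$.

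Every $\psi\in\Hom(H_X,H_Y)$ sends each edge $e\in L_X$ into $A$ or into $B$, so $g(\psi(e))$ is either $\{u\}$ or misses $u$. But every map $X\to l_1$ lies in $\Hom(H_X,H_{Y_1})$. Choose an edge $e$, a vertex $x_0\in e$ and $v\in l_1\setminus\{u\}$, and let $\psi_1(x_0)=u$ and $\psi_1\equiv v$ elsewhere. Then $\psi_1(e)=\{u,v\}$, so $(\Delta_X,\psi_1)$ is not in the image of $\ppi(\varphi,\psi)=(\varphi,g\circ\psi)$. Hence both (1) and (2) fail.

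The paper's proof takes exactly the step you distrusted. It chooses a $g$-preimage of $\psi_1(f(x))$ for each $x$ separately and asserts that the resulting map lies in $\Hom(H_X,H_Y)$, which does not follow.

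Of your fallback options, the homomorphic section is a genuine repair. If $s\colon H_{Y_1}\to H_Y$ is a homomorphism with $g\circ s=\Delta_{Y_1}$, then $\psi=s\circ\psi_1\circ f$ is a composite of homomorphisms with $(f\times g)(\psi)=\psi_1$, and the rest of your argument goes through unchanged. So the lemma becomes true once ``weak epimorphism'' is strengthened to ``$g$ has a homomorphic section''.

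Your other suggested hypothesis does not suffice, namely that every edge of $H_{Y_1}$ be the image of an edge of $H_Y$. Take $p=2$, let $H_{Y_1}$ have edges $\{u,v_1,v_2\}$ and $\{u,w_1,w_2\}$, and let $H_Y$ have disjoint edges $\{a,b_1,b_2\}$ and $\{a',c_1,c_2\}$. Put $g(a)=g(a')=u$, $g(b_i)=v_i$, $g(c_i)=w_i$; this $g$ is a strong epimorphism. Now set $H_X=H_{X_1}=H_{Y_1}$, $f=\Delta$ and $\psi_1=\Delta$. A lift $\psi$ must satisfy $\psi(v_i)=b_i$ and $\psi(w_i)=c_i$. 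Boundedness of $\psi(\{u,v_1,v_2\})$ and of $\psi(\{u,w_1,w_2\})$ then forces $\psi(u)=a$ and $\psi(u)=a'$ at once.

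So the statement as given cannot be proved. What your argument establishes is the version with a homomorphic-section hypothesis on $g$.
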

\begin{proof}
Assume that $H_X, H_Y, H_{X_1}, H_{Y_1}$ be effective hypergraphs with $p$-definable
edges. Since $f$ is an isomorphism,  then $f^{-1}$ exists. It follows that $\ppi$ maps
into $S(H_{X_1}, H_{Y_1})$ and is a semigroup homomorphism. 
\begin{enumerate}
    \item Consider for any $(\varphi, \psi) \in S(H_X, H_Y)$, the map
$f^2(\varphi) = f \circ \varphi \circ f^{-1}\colon X_1 \to X_1$ is an
endomorphism of $H_{X_1}$, since $f$ is an isomorphism and $\varphi \in \End H_X$.
For $(f\times g)(\psi) = g \circ \psi \circ f^{-1}\colon X_1 \to Y_1$, let
$e \in L_{X_1}$ be any edge. Since $f^{-1}$ is a homomorphism,
$f^{-1}(e)$ is bounded in $H_X$, so there exists $r \in L_X$ with
$f^{-1}(e) \subseteq r$. Since $\psi \in \Hom(H_X, H_Y)$, there exists
$l \in L_Y$ with $\psi(r) \subseteq l$. Since $g \in \Hom(H_Y, H_{Y_1})$,
there exists $l_1 \in L_{Y_1}$ with $g(l) \subseteq l_1$. Therefore
$(f\times g)(\psi)(e) = g(\psi(f^{-1}(e))) \subseteq g(l) \subseteq l_1$,
so $(f\times g)(\psi) \in \Hom(H_{X_1}, H_{Y_1})$.
Thus, for $(\varphi, \psi), (\varphi_1, \psi_1) \in S(H_X, H_Y)$ we have
\begin{align*}
  \ppi(\varphi,\psi)\,\ppi(\varphi_1,\psi_1)
  &= \bigl(f^2(\varphi),\,(f\times g)(\psi)\bigr)
     \bigl(f^2(\varphi_1),\,(f\times g)(\psi_1)\bigr) \\
  &= \bigl(f^2(\varphi)\,f^2(\varphi_1),\;
           f^2(\varphi)\,(f\times g)(\psi_1)\bigr) \\
  &= \bigl(f \circ \varphi\varphi_1 \circ f^{-1},\;
           g \circ \varphi\psi_1 \circ f^{-1}\bigr) \\
  &= \ppi\bigl((\varphi,\psi)(\varphi_1,\psi_1)\bigr).
\end{align*}

Hence, for any $(\varphi_1, \psi_1) \in S(H_{X_1}, H_{Y_1})$, by considering 
$\varphi = f^{-1} \circ \varphi_1 \circ f \in \End H_X$. Since $g$ is a
weak epimorphism, $g(Y) = Y_1$; for each $x \in X$, the value
$\psi_1(f(x)) \in Y_1$ has a preimage under $g$; thus there exists
$\psi \in \Hom(H_X, H_Y)$ such that $g \circ \psi = \psi_1 \circ f$, i.e.,
$(f \times g)(\psi) = \psi_1$. Then $\ppi(\varphi, \psi) = (\varphi_1, \psi_1)$.

\item For all $x \in X$ and $s = (\varphi,\psi) \in S(H_X, H_Y)$ satisfy
\begin{align*}
  f(\delta'(x,s)) &= f(\varphi(x)) = f(\varphi(f^{-1}(f(x)))) = f^2(\varphi)(f(x))
  = \delta'_1(f(x), \ppi(s)), \\
  g(\lambda'(x,s)) &= g(\psi(x)) = g(\psi(f^{-1}(f(x)))) = (f\times g)(\psi)(f(x))
  = \lambda'_1(f(x), \ppi(s)).
\end{align*}
Since $f$ is an isomorphism (hence surjective), $\ppi$ is surjective by
Statement~1, and $g$ is surjective by assumption, $\gamma = (f, \ppi, g)$ is an
epimorphism of automata.
\end{enumerate}
\end{proof}

Figure~\ref{fig01lemma1} illustrates Lemma~\ref{lem01sufficient}, which provides a sufficient condition for the existence of an epimorphism. The diagram consists of three layers: the automata (top, blue), their input-symbol semigroups (middle, orange), and the corresponding component hypergraphs (bottom, green). A double-headed arrow denotes an epimorphism, whereas a double line indicates an isomorphism.

Assume that $f\colon H_X \to H_{X_1}$ is an isomorphism and that $g\colon H_Y \twoheadrightarrow H_{Y_1}$ is a weak epimorphism. Then the induced mapping $\ppi=(f^2,f\times g)$
is a surjective semigroup homomorphism by Statement~1. Consequently, the triple $\gamma=(f,\ppi,g)$
defines an epimorphism between the corresponding automata by Statement~2.

\begin{figure}[H]
\centering
\small
\begin{tikzpicture}[ >=Stealth,
  node distance = 0.5cm,
  atm/.style  = {rectangle, rounded corners=3pt, draw=black!80, very thick,
                 fill=blue!6, minimum width=1.6cm, minimum height=0.6cm,
                 align=center, font=\small\bfseries},
  sem/.style  = {rectangle, rounded corners=3pt, draw=black!60, thick,
                 fill=orange!7, minimum width=1.6cm, minimum height=0.8cm,
                 align=center, font=\footnotesize},
  hyp/.style  = {rectangle, rounded corners=1.5pt, draw=black!50, thick,
                 fill=green!6, minimum width=1 cm, minimum height=0.3cm,
                 align=center, font=\tiny\bfseries},
  epi/.style  = {-{Stealth[length=1.5pt]}, double, thick,
                 shorten >=2pt, shorten <=2pt},
  iso/.style  = {-{Stealth[length=1.5pt]},
                 double, double distance=1pt,
                 thick, shorten >=1.5pt, shorten <=3pt},
  surj/.style = {-{Stealth[length=1.5pt]},
                 thick, shorten >=1.5pt, shorten <=2pt},
  act/.style  = {-{Stealth[length=1.5pt]}, dashed, gray!70,
                 shorten >=2pt, shorten <=2pt},
  ind/.style  = {-{Stealth[length=1.5pt]}, gray!55, thin,
                 shorten >=2pt, shorten <=2pt},
]

\node[atm] (A)
  {$\Atm(H_X,\, H_Y)$};

\node[atm, right=5.2cm of A] (A1)
  {$\Atm(H_{X_1},\, H_{Y_1})$};

\node[sem, below=1.8cm of A] (S)
  {$S(H_X, H_Y)$\\[2pt]
   ${\scriptstyle =\,\End H_X \times \Hom(H_X,H_Y)}$};

\node[sem, below=1.8cm of A1] (S1)
  {$S(H_{X_1}, H_{Y_1})$\\[2pt]
   ${\scriptstyle =\,\End H_{X_1} \times \Hom(H_{X_1},H_{Y_1})}$};

\node[hyp, below left=1.8cm and 0.45cm of S] (HX)
  {$H_X = (X, L_X)$};

\node[hyp, below right=1.8cm and 2cm of S] (HY)
  {$H_Y = (Y, L_Y)$};

\node[hyp, below left=1.8cm and 3.3cm of S1] (HX1)
  {$H_{X_1} = (X_1, L_{X_1})$};

\node[hyp, below right=1.8cm and 0.05mm of S1] (HY1)
  {$H_{Y_1} = (Y_1, L_{Y_1})$};

\draw[epi, very thick, blue!70]
  (A) -- node[above, font=\small, text=blue!80]
    {$\gamma=(f,\ppi,g)$}
       node[below, font=\scriptsize\itshape, text=blue!60]
    {epimorphism\quad(Statement 2)}
  (A1);

\draw[surj, orange!80!black, thick]
  (S) -- node[above, font=\small, text=orange!70!black]
    {$\ppi = \bigl(f^2,\; f\!\times\! g\bigr)$}
       node[below, font=\scriptsize\itshape, text=orange!60!black]
    {surjective hom.\,(Statement 1)}
  (S1);

\draw[act] (S)  -- node[left,  font=\scriptsize, text=gray!70] {acts on} (A);
\draw[act] (S1) -- node[right, font=\scriptsize, text=gray!70] {acts on} (A1);

\draw[iso, green!50!black, very thick]
  (HX) -- node[below, font=\small, text=green!40!black]
    {$f$}
  (HX1);

\draw[surj, purple!70, thick]
  (HY) -- node[below, font=\small, text=purple!60]
    {$g$}
  (HY1);
\draw[ind]
  (HX.north) to[out=85, in=225]
  node[left, font=\scriptsize, xshift=-1pt, yshift=2pt]
    {$\varphi \mapsto f^2(\!\varphi\!)$}
  (S.south west);
\draw[ind]
  (HY.north) to[out=95, in=315]
  node[right, font=\scriptsize, xshift=1pt, yshift=2pt]
    {$\psi \mapsto (f{\times}g)(\!\psi\!)$}
  (S.south east);
\begin{scope}[on background layer]
  \node[fit=(A)(A1), inner sep=10pt,
        fill=blue!3, rounded corners=8pt,
        draw=blue!20, dashed] (panelA) {};
  \node[fit=(S)(S1), inner sep=10pt,
        fill=orange!3, rounded corners=8pt,
        draw=orange!20, dashed] (panelS) {};
\end{scope}
\node[font=\scriptsize\itshape, text=blue!50,  anchor=west]
  at (panelA.east) {\quad automata};
\node[font=\scriptsize\itshape, text=orange!60!black, anchor=west]
  at (panelS.east) {\quad semigroups};
\end{tikzpicture}

\caption{Visualisation of Lemma~\ref{lem01sufficient}.}
\label{fig01lemma1}
\end{figure}
We now consider a strengthened version of the result for $p$-hypergraphs, where every weak epimorphism is automatically a strong epimorphism.

\begin{lemma}\label{lem02weakimpliesstrong}
Let $H_Y = (Y, L_Y)$ be an effective hypergraph with $p$-definable edges, let
$H_{Y_1} = (Y_1, L_{Y_1})$ be a $p$-hypergraph, and let $g\colon H_Y \to H_{Y_1}$
be a weak epimorphism such that the induced map $(f\times g)$ is surjective
onto $\Hom(H_{X_1}, H_{Y_1})$ for some isomorphism $f\colon H_X \to H_{X_1}$.
Then $g$ is a strong epimorphism.
\end{lemma}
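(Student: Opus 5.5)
The plan is to prove the edge condition of Definition~\ref{def:strong-epi} directly; vertex surjectivity $g(Y)=Y_1$ is already part of the hypothesis. Fix an edge $l_1\in L_{Y_1}$. We must produce an edge $l\in L_Y$ with $g(l)=l_1$. The mechanism is the surjectivity of $f\times g$ onto $\Hom(H_{X_1},H_{Y_1})$. As noted in the Preliminaries, every mapping $X_1\to l_1$ is a homomorphism $H_{X_1}\to H_{Y_1}$, so there is a large supply of targets $\psi_1$ whose images lie in $l_1$. For any such $\psi_1$ we obtain $\psi\in\Hom(H_X,H_Y)$ with $g\circ\psi\circ f^{-1}=\psi_1$. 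Since $f$ is a bijection, this gives $g(\psi(X))=\psi_1(X_1)$.

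\textbf{Step 1 (choice of $\psi_1$).} Choose $\psi_1\colon X_1\to l_1$ so that a single edge $e\in L_{X_1}$ is mapped onto $l_1$, or at least onto a set containing $p$ prescribed distinct points of $l_1$. Put $r=f^{-1}(e)$. Since $f$ is an isomorphism, $r\in L_X$. As $\psi$ is a homomorphism, $\psi(r)\subseteq l$ for some $l\in L_Y$.

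\textbf{Step 2 (forcing $g(l)\subseteq l_1$).} Since $g$ is a homomorphism, $g(l)\subseteq l_1'$ for some $l_1'\in L_{Y_1}$. On the other hand,
\[
g(l)\supseteq g(\psi(r))=\psi_1(e),
\]
and $\psi_1(e)$ contains $p$ distinct points of $l_1$. By axiom (A1) of Definition~\ref{def0p0hypergraph}, these $p$ points lie in exactly one edge, so $l_1'=l_1$. Hence $\psi_1(e)\subseteq g(l)\subseteq l_1$.

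\textbf{Step 3 (equality).} For equality we need $\psi_1(e)=l_1$. In that case $g(l)=l_1$ and we are done.

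\textbf{Main obstacle.} Step~3 requires $|e|\ge|l_1|$ for some edge $e$ of $H_{X_1}$, which is a cardinality condition not stated in the lemma. If it fails, I would vary the construction. Fix $p$ points $a_1,\dots,a_p\in l_1$. For each further point $b\in l_1$, pick $\psi_1^{(b)}$ sending an edge $e$ onto a set containing $\{a_1,\dots,a_p,b\}$; this is possible because $|e|\ge p+1$. Steps~1--2 then give an edge $l_b\in L_Y$ with $\{a_1,\dots,a_p,b\}\subseteq g(l_b)\subseteq l_1$.

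The remaining difficulty is to show that the edges $l_b$ can be taken to coincide, i.e.\ to glue the local preimages into one edge of $H_Y$. For this I would try the $p$-definability of $H_Y$: if the $l_b$ share $p$ common vertices $y_1,\dots,y_p$ with $g(y_i)=a_i$, they must be equal. Arranging this needs a consistent choice of preimages of the $a_i$ in $\psi$. I would try to enforce it by choosing all $\psi_1^{(b)}$ to agree on $p$ fixed vertices of $e$ and then adjusting the corresponding $\psi$. This step may need an extra hypothesis; making it work is where I expect the real work, or a gap, to lie.
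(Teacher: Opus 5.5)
\textbf{There is a genuine gap, and it cannot be closed: the statement is false as written.}

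Your Steps 1--2 are sound and match the paper's argument. Lifting a map $\psi_1\colon X_1\to l_1$ that is injective on an edge $e\in L_{X_1}$ gives $l\in L_Y$ with $\psi_1(e)\subseteq g(l)\subseteq l_1$. The gap is where you put it: Step~3 and the proposed gluing of the edges $l_b$. Surjectivity of $f\times g$ gives you no control over which lift $\psi$ you receive, and the $l_b$ can be forced to be pairwise distinct.

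Here is a counterexample. Take $p=2$, $H_X=H_{X_1}$ the Fano plane, $f=\id$, and $H_{Y_1}=PG(2,3)$, whose lines have $4$ points. Let $H_Y$ have vertex set $X_1\times\Hom(H_{X_1},H_{Y_1})$ and edges $e\times\{\psi_1\}$ for $e\in L_{X_1}$ and $\psi_1\in\Hom(H_{X_1},H_{Y_1})$; this is a disjoint union of copies of the Fano plane. Put $g(x,\psi_1)=\psi_1(x)$. Then:
\begin{itemize}
\item $H_Y$ is effective with $2$-definable edges, and $g$ is a homomorphism.
\item $g$ is onto, because every constant map $C_b$ lies in $\Hom(H_{X_1},H_{Y_1})$.
\item $f\times g$ is onto, because each $\psi_1$ equals $(f\times g)(\psi)$ for the homomorphism $\psi(x)=(x,\psi_1)$.
\item Every edge of $H_Y$ has at most $3$ points in its image, so no $4$-point line of $PG(2,3)$ equals $g(l)$ for an edge $l$. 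Hence $g$ is not a strong epimorphism.
\end{itemize}
The same construction refutes the lemma whenever some $l_1\in L_{Y_1}$ is strictly larger than every edge of $H_{X_1}$.

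The paper's own proof has exactly this hole. Having found $r\in L_Y$ with $g(r)\subseteq l_1$ and $g(r)$ containing $p+1$ points of $l_1$, it concludes $g(r)=l_1$ from axiom (A1). But (A1) only identifies which edge contains $g(r)$; it says nothing about $g(r)$ exhausting that edge when $|l_1|>p+1$.

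Your Steps 1--3 do prove the lemma, with no gluing, under an extra hypothesis: for every $l_1\in L_{Y_1}$ there is some $e\in L_{X_1}$ with $|e|\ge|l_1|$. This holds, for instance, when $H_{X_1}$ and $H_{Y_1}$ are projective planes of the same order. In that case choose $\psi_1\colon X_1\to l_1$ with $\psi_1(e)=l_1$, lift it, and Step~2 gives $g(l)=l_1$.

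So your suspicion that an extra hypothesis is needed is correct. The right repair is a size condition of this kind, not a gluing argument.
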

\begin{proof}
Let $l_1 \in L_{Y_1}$ be any edge. By axiom~(A1) of Definition~\ref{def0p0hypergraph},
any $p$ vertices of $H_{X_1}$ lie in exactly one edge of $H_{X_1}$. Choose
any edge $e_1 \in L_{X_1}$; by axiom~(A2), $e_1$ contains at least $p+1$
vertices. Define a mapping $\psi_1\colon X_1 \to Y_1$ by setting
$\psi_1(x) \in l_1$ for all $x \in X_1$ such that $\psi_1$ maps some
$p$-element subset of each edge of $H_{X_1}$ into an edge of $H_{Y_1}$.
Since $l_1$ is an edge of $H_{Y_1}$ with $|l_1| \geq p+1$, the constant-like
mapping $\psi_1\colon X_1 \to l_1$ is a homomorphism from $H_{X_1}$ to $H_{Y_1}$. Thus, by surjectivity of $(f\times g)$, there exists $\psi \in \Hom(H_X, H_Y)$
such that $(f\times g)(\psi) = g \circ \psi \circ f^{-1} = \psi_1$. Therefore
$g(\psi(f^{-1}(x_1))) = \psi_1(x_1) \in l_1$ for all $x_1 \in X_1$.

Now choose $p+1$ distinct vertices $x_1^{(1)}, \ldots, x_1^{(p+1)} \in e_1$.
Their preimages $f^{-1}(x_1^{(i)})$ lie in some edge of $H_X$ (since $f^{-1}$
is an isomorphism). Let $l = \psi\bigl(f^{-1}(e_1)\bigr) \subseteq Y$.
Since $\psi$ is a homomorphism, $l$ is bounded in $H_Y$, so there exists
$r \in L_Y$ with $\psi(f^{-1}(e_1)) \subseteq r$.

The image $g(r)$ is bounded in $H_{Y_1}$; let $l_1' \in L_{Y_1}$ satisfy
$g(r) \subseteq l_1'$. Then,  $g(r)$ contains $g(\psi(f^{-1}(x_1^{(i)})))
= \psi_1(x_1^{(i)}) \in l_1$ for $i = 1, \ldots, p+1$, which are $p+1$
distinct elements of $l_1$ (since $\psi_1$ was chosen injective on $e_1$).
By $p$-definability and axiom~(A1) of $H_{Y_1}$, any $p$ of these vertices
determine $l_1$ uniquely. Thus $l_1' = l_1$, i.e., $g(r) \subseteq l_1$.

Since $l_1 \in L_{Y_1}$ was arbitrary, every edge of $H_{Y_1}$ is covered by
the image of some edge of $H_Y$ under $g$. Moreover, since $g$ is a
homomorphism, $g(r) \subseteq l_1$, and $g(r)$ contains $p+1$ vertices of
$l_1$. By axiom~(A1), $l_1$ is the unique edge containing these $p$ vertices,
so $g(r) = l_1$. Hence $g$ is a strong epimorphism.
\end{proof}

Figure~\ref{fig02lemma3} illustrates Lemma~\ref{lem02weakimpliesstrong}, which shows that every weak epimorphism between $p$-hypergraphs is, in fact, a strong epimorphism. The hypothesis comprises three components represented in the diagram: $f\colon H_X \xrightarrow{\sim} H_{X_1}$ is an isomorphism, $g\colon H_Y \twoheadrightarrow H_{Y_1}$ is a weak epimorphism, and the induced mapping $f\times g$
is surjective onto $\Hom(H_{X_1},H_{Y_1})$. 
The $p$-hypergraph axioms on $H_{Y_1}$ imply that every edge $l_1\in L_{Y_1}$ is the image under $g$ of some edge $l\in L_Y$. Consequently, $g(L_Y)=L_{Y_1},$
which is precisely the condition that $g$ be a strong epimorphism.

\begin{figure}[H]
\centering
\begin{tikzpicture}[
  >=Stealth,
  hyp/.style = {rectangle, rounded corners=2pt, draw=black!65, thick,
                minimum width=2.5cm, minimum height=0.6cm,
                align=center, font=\small},
  hom/.style = {rectangle, rounded corners=1pt, draw=black!50, thick,
                minimum width=2 cm, minimum height=0.95cm,
                align=center, font=\footnotesize},
  edge/.style= {rectangle, rounded corners=1pt, draw=black!40,
                fill=white, minimum width=0.6cm, minimum height=0.55cm,
                align=center, font=\scriptsize},
  iso/.style  = {-{Stealth[length=3pt]}, double, double distance=2pt,
                 thick, shorten >=3pt, shorten <=3pt},
  wepi/.style = {-{Stealth[length=3pt]}, thick,
                 shorten >=3pt, shorten <=3pt},
  stepi/.style= {-{Stealth[length=1pt]}, very thick,
                 shorten >=3pt, shorten <=3pt},
  surj/.style = {-{Stealth[length=1pt]}, dashed, thick,
                 shorten >=3pt, shorten <=3pt},
  factor/.style={-{Stealth[length=2pt]}, black, thin,
                 shorten >=3pt, shorten <=3pt},
]
\node[hyp]  (HX)
  {$H_X = (X, L_X)$\\[1pt]{\scriptsize eff.\ $p$-def.}};

\node[hyp, right=4.8cm of HX] (HX1)
  {$H_{X_1} = (X_1, L_{X_1})$\\[1pt]{\scriptsize eff.\ $p$-def.}};

\node[hyp,below=2cm of HX] (HY)
  {$H_Y = (Y, L_Y)$};

\node[hyp, below=2cm of HX1] (HY1)
  {$H_{Y_1} = (Y_1, L_{Y_1})$};

\node[hom, below=1.9cm of HY] (HomXY)
  {$\Hom(H_X, H_Y)$};

\node[hom,  below=1.9cm of HY1] (HomX1Y1)
  {$\Hom(H_{X_1}, H_{Y_1})$};
\draw[iso]
  (HX) -- node[above, font=\small, text=black]
    {$f$\;:\;isomorphism}
  (HX1);

\draw[stepi,  transform canvas={yshift=-0.1cm}]  (HY.east) -- node[below, font=\small, text=black]  { }              node[above, font=\scriptsize\itshape, text=black]     {$\forall\, l_1\!\in\! L_{Y_1},\;\exists\, l\!\in\! L_Y:\; g(l)=l_1$}   (HY1.west);

\draw[surj]
  (HomXY) -- node[above, font=\small, text=black]
    {$(f\!\times\! g)$\;surjective\quad(hypothesis)}
  (HomX1Y1);

\draw[factor, bend right=60]  (HX.south) to node[left, font=\scriptsize] { } (HomXY.north);
\draw[factor, bend left=60]  (HY.south) to node[right, font=\scriptsize] { } (HomXY.north);

\draw[factor, bend left=60]
  (HX1.south) to node[left, font=\scriptsize] { } (HomX1Y1.north);
\draw[factor, bend right=60]
  (HY1.south) to node[right, font=\scriptsize] { } (HomX1Y1.north);

\draw[-{Stealth[length=6pt]}, black, densely dotted, thick,
      shorten >=3pt, shorten <=3pt, bend right=30]
  (HomX1Y1.north west) to
  node[left, font=\scriptsize, text=black, xshift=-2pt]
    {$\exists\,\psi$ (lift)}
  (HomXY.north east);
\end{tikzpicture}

\caption{Illustration of Lemma~\ref{lem02weakimpliesstrong}: for $p$-hypergraphs, every weak epimorphism is a strong epimorphism. Surjectivity of $f\times g$ and the $p$-hypergraph axioms imply that every edge of $H_{Y_1}$ is the image of an edge of $H_Y$.}
\label{fig02lemma3}
\end{figure}

We now state and prove the main theorem for general effective hypergraphs with
$p$-definable edges.

\begin{theorem}\label{thm:main-general}
Let $H_X = (X, L_X)$, $H_Y = (Y, L_Y)$, $H_{X_1} = (X_1, L_{X_1})$, and
$H_{Y_1} = (Y_1, L_{Y_1})$ be effective hypergraphs with $p$-definable edges.
Let $f\colon X \to X_1$ be a surjection such that there exists a set
$V = \{v_1, v_2, \ldots, v_{p+1}\} \subseteq X$ in which all $p$-element
subsets are bounded in $H_X$ but $f(V) = \{f(v_1), \ldots, f(v_{p+1})\}$
is an unbounded set in $H_{X_1}$. Let $g\colon Y \to Y_1$ be a mapping. Then
the following conditions are equivalent:
\begin{enumerate}
  \item $\ppi = (f^2, f\times g)$ is an epimorphism from $S(H_X, H_Y)$ to
        $S(H_{X_1}, H_{Y_1})$;
  \item $f$ is an isomorphism from $H_X$ to $H_{X_1}$ and $g$ is a weak
        epimorphism from $H_Y$ to $H_{Y_1}$;
  \item $\gamma = (f, \ppi, g)$ is an epimorphism from $\Atm(H_X, H_Y)$ to
        $\Atm(H_{X_1}, H_{Y_1})$.
\end{enumerate}
\end{theorem}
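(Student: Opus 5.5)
The plan is the cycle $(2)\Rightarrow(1)\Rightarrow(2)$, with $(2)\Leftrightarrow(3)$ handled separately. The implication $(2)\Rightarrow(1)$ is Statement~1 of Lemma~\ref{lem01sufficient}, and $(2)\Rightarrow(3)$ is Statement~2 of the same lemma. For $(3)\Rightarrow(1)$ there is nothing to do beyond the definition: if $\gamma$ is an epimorphism of automata, then by Definition~\ref{def:epi-automaton} its middle component $\ppi$ is a surjective semigroup homomorphism. So the only real work is $(1)\Rightarrow(2)$. Here we assume $\ppi=(f^2,f\times g)$ is a surjective homomorphism, and we must show that $f$ is an isomorphism of hypergraphs and $g$ is surjective. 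A caveat: $f^2$ and $f\times g$ are only defined through $f^{-1}$. I would therefore read (1) as saying that $\ppi$ is a well-defined surjective homomorphism of the stated form, so $f$ is at least a bijection. Failing that, I would use the defining relations $f(\varphi(x))=\varphi'(f(x))$ and $g(\psi(x))=\psi'(f(x))$ with $\ppi(\varphi,\psi)=(\varphi',\psi')$, which is the automaton homomorphism condition.

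\textbf{Injectivity of $f$.} I would use Lemma~\ref{lemzerosidentities}. A surjective homomorphism maps right zeros to right zeros, so $\ppi(C_a,C_b)$ is a right zero of $S(H_{X_1},H_{Y_1})$, i.e. of the form $(C_{a'},C_{b'})$. The intertwining relation then forces $a'=f(a)$ and $b'=g(b)$. Surjectivity of $\ppi$ onto the set of right zeros, together with surjectivity of $f$, gives every $C_{b_1}$, $b_1\in Y_1$, as some $C_{g(b)}$. Hence $g$ is onto, which settles the weak epimorphism part of (2); $g$ being a homomorphism is checked by pulling back edges via constant-like maps into an edge. For injectivity of $f$, suppose $f(x)=f(x')$ with $x\neq x'$. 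Since $H_X$ is effective with $p$-definable edges, I would pick an endomorphism $\varphi$ whose image lies in one edge and which separates $x$ from $x'$ in a way $f$ must respect; compatibility $f\varphi=\varphi' f$ then yields a contradiction. This is the same style of argument as in the monomorphism paper \cite{Khvorostukhina2025}.

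\textbf{Edge preservation, the main obstacle.} Once $f$ is a bijective homomorphism, it remains to show that $f^{-1}$ is a homomorphism and that edges correspond to edges. This is where the hypothesis on $V$ enters, and I expect it to be the hardest step. Suppose some edge $e_1\in L_{X_1}$ is not bounded after pulling back by $f^{-1}$. Then $f^{-1}(e_1)$ contains a $(p+1)$-set whose image is bounded while the set itself is not. Symmetrically, $V$ is a $(p+1)$-set all of whose $p$-subsets are bounded in $H_X$ but whose image $f(V)$ is unbounded. I would use surjectivity of $\ppi$ to lift an endomorphism $\varphi_1$ of $H_{X_1}$ that collapses $X_1$ into $e_1$ and moves $f(V)$ bijectively onto $p+1$ points of an edge. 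The lift $\varphi=f^{-1}\varphi_1 f$ must be an endomorphism of $H_X$. Comparing boundedness of $\varphi(\text{edge})$ with the configuration of $V$, and using $p$-definability (two edges share fewer than $p$ points), should yield that $f$ maps edges exactly onto edges. This gives that $f$ is an isomorphism, completing $(1)\Rightarrow(2)$. If the direct construction fails, my fallback is to transport the right-zero and left-identity structure of Lemma~\ref{lemzerosidentities} and recover the edges of $H_{X_1}$ as the maximal sets on which constant-like endomorphisms act; $\ppi$ then matches these sets with the corresponding sets in $H_X$.
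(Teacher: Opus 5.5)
Your outer structure matches the paper's. You get (2)$\Rightarrow$(1) and (2)$\Rightarrow$(3) from Lemma~\ref{lem01sufficient}, and (3)$\Rightarrow$(1) by definition. Your right-zero argument for $g(Y)=Y_1$ is the paper's Case~1. Reading (1) as presupposing that $f$ is bijective is sensible, and it makes your separate injectivity sketch unnecessary.

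The gap is in the step you yourself flag as the main one. As you specify it, $\varphi_1\colon X_1\to e_1$ is constrained only on $f(V)$, which it sends onto $f(W)$ for an unbounded $(p+1)$-set $W\subseteq f^{-1}(e_1)$. So the lift $\varphi=f^{-1}\varphi_1 f$ sends $V$ onto $W$, and that contradicts nothing:
\begin{itemize}
\item at this stage $f$ is a homomorphism, so $V$ is itself unbounded in $H_X$ (otherwise $f(V)$ would be bounded);
\item the $p$-subsets of $V$ go to $p$-subsets of $W$, which may well be bounded (e.g.\ if $V$ and $W$ are both triangles in a projective plane).
\end{itemize}
To contradict $\varphi\in\End H_X$, you must force $\varphi$ to map some \emph{edge} of $H_X$ onto a set containing $W$. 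That means prescribing $\varphi_1$ on the whole image of that edge, and ``should yield'' skips exactly this.

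The repair within your approach is short. Take any $r\in L_X$ and any map $\varphi_1\colon X_1\to e_1$ with $\varphi_1(f(r))=f(W)$; this is possible since $|r|\ge p+1$ and $f$ is bijective. It is an endomorphism because its image lies in an edge. Surjectivity of $\ppi$ then puts $\varphi=f^{-1}\varphi_1 f$ in $\End H_X$, yet $\varphi(r)=W$ is unbounded.

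Your own $\varphi_1$ also works if you additionally send all of $X_1\setminus f(V)$ to $f(w_1)$. Then an edge through $\{v_2,\dots,v_{p+1}\}$ is mapped onto a superset of $W$.

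Either way, every $(p+1)$-subset of $f^{-1}(e_1)$ is bounded. Then $p$-definability gives $f^{-1}(e_1)$ bounded: any $p$ of its points lie in a unique edge, and that edge must contain every further point.

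This route uses surjectivity of $\ppi$ and does not need $V$ at all. The paper instead uses $V$ as the \emph{target}.
\begin{itemize}
\item It builds $\varphi\in\End H_X$ with values in $V$ that sends an unbounded $(p+1)$-subset of $f^{-1}(l)$ onto $V$.
\item Boundedness of the $p$-subsets of $V$ is what makes this $\varphi$ an endomorphism, with a case split on whether some $p$-subset of $f^{-1}(l)$ is bounded.
\item Then $f^2(\varphi)$ maps the edge $l$ onto the unbounded set $f(V)$, a contradiction.
\end{itemize}
So the paper needs only that $f^2$ maps $\End H_X$ into $\End H_{X_1}$, not surjectivity.

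The same $(p+1)$-subset reduction is also what makes your ``constant-like maps into an edge'' argument actually prove that $f$ and $g$ are homomorphisms. Without it, the argument fails when edges of $H_{X_1}$ are smaller than those of $H_X$ or $H_Y$.
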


\begin{proof}
According to Lemma~\ref{lem01sufficient}, we emphsize that Condition~2 implies both Conditions~1
and~3. Thus, we show that Condition~3 implies Condition~1. It remains
to show that Condition~1 implies Condition~2.
Assume Condition~1 satisfy $\ppi = (f^2, f\times g)$ is a surjective semigroup
homomorphism.

\medskip
\noindent\textbf{Case 1: $g$ is surjective (weak epimorphism condition).} In this case, consider any $b_1 \in Y_1$. By Lemma~\ref{lemzerosidentities}, for any
$a_1 \in X_1$ the pair $(C_{a_1}, C_{b_1})$ is a right zero of
$S(H_{X_1}, H_{Y_1})$. By surjectivity of $\ppi$, there exists
$(\varphi, \psi) \in S(H_X, H_Y)$ with $\ppi(\varphi,\psi) = (C_{a_1}, C_{b_1})$.
Then $(f\times g)(\psi) = C_{b_1}$, meaning $g(\psi(f^{-1}(x_1))) = b_1$ for
all $x_1 \in X_1$. In particular, $b_1 \in g(Y)$. Since $b_1$ was arbitrary,
$g(Y) = Y_1$, hence $g$ is surjective.

\medskip
\noindent\textbf{Case 2: $f$ is a homomorphism.} In this case, we consider any edge $l \in L_X$. Any mapping $\varphi\colon X \to l$ is an
endomorphism of $H_X$. For any $\psi \in \Hom(H_X, H_Y)$, the pair
$(\varphi, \psi) \in S(H_X, H_Y)$, and $\ppi(\varphi, \psi) \in S(H_{X_1}, H_{Y_1})$.
By Condition~1, $\ppi(\varphi, \psi) = (f^2(\varphi), (f\times g)(\psi))$, so
$\chi = f^2(\varphi)$ is an endomorphism of $H_{X_1}$. For any
$x \in X_1$, $\chi(x) = f(\varphi(f^{-1}(x))) \in f(l)$. Thus $f(l)$ is a
bounded set in $H_{X_1}$, so there exists $r \in L_{X_1}$ with
$f(l) \subseteq r$. Since $l \in L_X$ was arbitrary, $f$ is a homomorphism.
Since $f$ is a surjection by hypothesis, $f$ is an epimorphism of hypergraphs. Therefore, by an analogous argument using $(f\times g)(\psi)$, $g$ is a homomorphism.

\medskip
\noindent\textbf{Case 3: $f$ is injective.} Assume $f(a) = f(b)$ where $a, b \in X$. Since $H_X$ is effective, both
$C_a$ and $C_b$ are endomorphisms of $H_X$. By Lemma~\ref{lemzerosidentities},
for some $d \in Y$ the pairs $(C_a, C_d)$ and $(C_b, C_d)$ are right zeros of
$S(H_X, H_Y)$. Then:
\[
  \ppi(C_a, C_d) = (C_{f(a)}, (f\times g)(C_d))
  = (C_{f(b)}, (f\times g)(C_d)) = \ppi(C_b, C_d).
\]
Since $\ppi$ is a homomorphism of semigroups, right zeros map to right zeros.
By Lemma~\ref{lemzerosidentities} applied to $S(H_{X_1}, H_{Y_1})$, the
image must be a pair of constant maps. Since $f(a) = f(b)$ implies
$C_{f(a)} = C_{f(b)}$, we get $\ppi(C_a, C_d) = \ppi(C_b, C_d)$.
But $\ppi$ sends distinct right zeros to distinct right zeros (it is injective
on this subset, as each right zero $(C_a, C_d)$ determines $a$ uniquely
through the first component). Hence $C_a = C_b$, so $a = b$. Thus $f$ is
injective, and since it is surjective by hypothesis, $f$ is a bijection.

\medskip
\noindent\textbf{Case 4: $f^{-1}$ is a homomorphism (so $f$ is an isomorphism).} In this case, suppose for contradiction that there exists an edge $l \in L_{X_1}$ such that
no edge $r \in L_X$ satisfies $f^{-1}(l) \subseteq r$. By Definition~\ref{def:p-definable},
$l$ contains at least $p+1$ distinct vertices $x_1, \ldots, x_{p+1}$. Since
$f$ is a bijection, $H_X$ contains $p+1$ distinct vertices $y_i = f^{-1}(x_i)$
with $\{y_1, \ldots, y_{p+1}\} = f^{-1}(l)$, and no edge of $H_X$ contains
all of them.

By hypothesis, there exists $V = \{v_1, \ldots, v_{p+1}\} \subseteq X$ whose
$p$-element subsets are all bounded in $H_X$. We consider two cases.

 \begin{enumerate}
     \item All $p$-element subsets of $f^{-1}(l)$ are unbounded in $H_X$.
Define:
\[
  \varphi(x) = \begin{cases}
    v_i & \text{if } x = y_i,\; i = 1, \ldots, p, \\
    v_{p+1} & \text{otherwise}.
  \end{cases}
\]
For any edge $r \in L_X$, by assumption no $p$-element subset of $f^{-1}(l)$
lies in $r$, so $|\varphi(r)| \leq p$, meaning $\varphi(r)$ is a $p$-element
subset of $V$. By hypothesis, all $p$-element subsets of $V$ are bounded.
Hence $\varphi \in \End H_X$. Then $\ppi(\varphi, \psi) \in S(H_{X_1}, H_{Y_1})$
for any $\psi \in \Hom(H_X, H_Y)$. But $f^2(\varphi)(l) = \{f(v_1), \ldots, f(v_{p+1})\}$.
By the hypothesis of the theorem, this set is unbounded in $H_{X_1}$, contradicting
$f^2(\varphi) \in \End H_{X_1}$.
\item Some $p$-element subset of $f^{-1}(l)$ is bounded. Let 
$\{y_1, \ldots, y_p\} \subseteq r$ for some $r \in L_X$, with $y_{p+1} \notin r$.
Since $f$ is a homomorphism, $f(r) \subseteq l$. Then, we establish that
\[
  \varphi(x) = \begin{cases}
    v_i & \text{if } x = y_i,\; i = 1, \ldots, p-1, \\
    v_p & \text{if } x \in r \text{ and } x \neq y_i \text{ for } i = 1,\ldots,p-1, \\
    v_{p+1} & \text{otherwise}.
  \end{cases}
\]
Then $\varphi(r) = \{v_1, \ldots, v_p\}$, which is bounded in $H_X$. Any
other edge $r'$ of $H_X$ satisfies $|r' \cap r| \leq p-1$ by
Definition~\ref{def:p-definable}, so $\varphi(r')$ lies in a $p$-element
subset of $V$, which is bounded. Hence $\varphi \in \End H_X$. But
$f^2(\varphi)(l) = \{f(v_1), \ldots, f(v_{p+1})\}$, which is unbounded in
$H_{X_1}$ by hypothesis a contradiction.
 \end{enumerate}

In both cases we reach a contradiction, so $f^{-1}$ is a homomorphism and $f$
is an isomorphism from $H_X$ to $H_{X_1}$. 
Since $f$ is an isomorphism and $g$ is a surjective homomorphism, Condition~2
holds, completing the proof.
\end{proof}

\begin{corollary}~\label{cor01planes}
Let $H_X$, $H_{X_1}$ be projective or affine planes, viewed as $2$-hypergraphs. Let $H_Y = (Y, L_Y)$, $H_{Y_1} = (Y_1, L_{Y_1})$
be effective hypergraphs with $2$-definable edges. Let $f\colon X \to X_1$
be a surjection and $g\colon Y \to Y_1$ a mapping. Then the following are
equivalent:
\begin{enumerate}
  \item $\ppi = (f^2, f\times g)$ is an epimorphism from $S(H_X, H_Y)$ to
        $S(H_{X_1}, H_{Y_1})$;
  \item $f$ is a collineation from $H_X$ to $H_{X_1}$,
        and $g$ is a strong epimorphism from $H_Y$ to $H_{Y_1}$;
  \item $\gamma = (f, \ppi, g)$ is a strong epimorphism from $\Atm(H_X, H_Y)$
        to $\Atm(H_{X_1}, H_{Y_1})$.
\end{enumerate}
\end{corollary}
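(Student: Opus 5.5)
We will derive Corollary~\ref{cor01planes} from Theorem~\ref{thm:main-general} combined with Lemma~\ref{lem01sufficient} and Lemma~\ref{lem02weakimpliesstrong}. The plan has three steps: check that a projective or affine plane satisfies the extra hypothesis on $f$ in Theorem~\ref{thm:main-general}, then identify the conditions appearing there with those of the corollary, and finally upgrade ``weak'' to ``strong'' for the output map.

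\textbf{Step 1: the set $V$.} Take $p=2$. A projective or affine plane $H_X$ is a $2$-hypergraph, hence effective with $2$-definable edges. By axiom (A3), $H_X$ contains a triangle $V=\{v_1,v_2,v_3\}$: three non-collinear points. Each $2$-subset of $V$ lies on a line by (A1), so all $2$-subsets of $V$ are bounded. We need $f(V)$ to be unbounded in $H_{X_1}$.

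This is not automatic for an arbitrary surjection. We will argue it as follows.
\begin{itemize}
\item If $f$ sends every triangle of $H_X$ to a collinear triple, then every $2$-element edge-section maps into a line.
\item Using the surjectivity of $\ppi$ on right zeros and left identities (Lemma~\ref{lemzerosidentities}), as in Cases 1--3 of the proof of Theorem~\ref{thm:main-general}, $f$ is injective. That part of the argument does not use $V$.
\item A bijection of planes that maps every non-collinear triple to a collinear one is impossible, since $H_{X_1}$ itself contains a triangle and $f^{-1}$ is defined.
\end{itemize}
So some triangle $V$ has $f(V)$ unbounded, and Theorem~\ref{thm:main-general} applies. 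Alternatively, we can handle direction (1)$\Rightarrow$(2) directly: rerun Case 4 of the theorem with a triangle $V'$ of $H_X$ chosen so that $f(V')$ is the preimage-image of a triangle of $H_{X_1}$, which is possible because $f$ is a bijection.

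\textbf{Step 2: matching the conditions.} An isomorphism of planes viewed as $2$-hypergraphs is exactly a collineation, since it is a bijection mapping lines onto lines in both directions. Theorem~\ref{thm:main-general} therefore gives
\begin{itemize}
\item (1) $\Rightarrow$ $f$ is a collineation and $g$ is a weak epimorphism;
\item weak epimorphism $\Rightarrow$ (1) and (3), via Lemma~\ref{lem01sufficient}. In particular (2) $\Rightarrow$ (1) and (3), since a strong epimorphism is weak.
\end{itemize}
Also (3) $\Rightarrow$ (1) holds by definition, because a strong epimorphism of automata has $\ppi$ surjective.

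\textbf{Step 3: upgrading $g$.} The remaining implications are (1) $\Rightarrow$ (2) and (1) $\Rightarrow$ (3), which both reduce to showing that $g$ is a strong epimorphism. Surjectivity of $\ppi$ gives surjectivity of $f\times g$ onto $\Hom(H_{X_1},H_{Y_1})$, and $f$ is now an isomorphism. So Lemma~\ref{lem02weakimpliesstrong} is the natural tool.

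\textbf{Main obstacle.} Lemma~\ref{lem02weakimpliesstrong} requires $H_{Y_1}$ to be a $p$-hypergraph, whereas the corollary assumes only that $H_{Y_1}$ is effective with $2$-definable edges. I would first try to reprove that lemma under the weaker hypothesis. Given an edge $l_1$ of $H_{Y_1}$, pick three distinct points of $l_1$ and a line $e_1$ of $H_{X_1}$. Let $\psi_1$ map three points of $e_1$ injectively onto those points and send everything else into $l_1$; this is a homomorphism. Lift $\psi_1$ to $\psi$, take an edge $r\supseteq\psi(f^{-1}(e_1))$, and use $2$-definability of $H_{Y_1}$ to conclude that $g(r)\subseteq l_1$. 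The difficulty is obtaining equality $g(r)=l_1$ rather than mere containment. To get it, I would choose $\psi_1$ to map $X_1$ \emph{onto} $l_1$, which is possible when $|X_1|\ge|l_1|$. Then the image of $\psi$ meets $g^{-1}$ of every point of $l_1$, but that image lies in a single edge $r$ only if $\psi(X)$ is bounded. If equality fails in general, the fallback is to add the hypothesis that $H_{Y_1}$ is a $2$-hypergraph, matching Lemma~\ref{lem02weakimpliesstrong}.
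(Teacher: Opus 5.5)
\textbf{The gap in Step~3 cannot be closed, because the statement is false as written.} The implications (1)$\Rightarrow$(2) and (1)$\Rightarrow$(3) fail. Surjectivity of $f\times g$ only forces the images of \emph{single lines} of $H_{X_1}$ into single edges, and a line may be too small to fill an edge of $H_{Y_1}$.

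Here is a counterexample. Let $H_X=H_{X_1}$ be the Fano plane $PG(2,2)$ and $f=\Delta_X$. Let $H_{Y_1}$ have vertex set $Y_1=\{1,2,3,4\}$ and the single edge $Y_1$. Let $Y=\mathrm{Fun}(X,Y_1)\times X$, with edges $\{\chi\}\times e$ for $e$ a line, and put $g(\chi,x)=\chi(x)$.
\begin{itemize}
\item Distinct edges of $H_Y$ meet in at most one vertex, so $H_Y$ and $H_{Y_1}$ are both effective with $2$-definable edges.
\item $g$ is a surjective homomorphism.
\item Every map $\chi\colon X\to Y_1$ lies in $\Hom(H_X,H_{Y_1})$ and lifts to the homomorphism $x\mapsto(\chi,x)$. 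Hence $\ppi$ is a surjective semigroup homomorphism, and (1) holds.
\item Every edge of $H_Y$ has three vertices, so no $g(r)$ equals the four-element edge of $H_{Y_1}$. Thus $g$ is not strong, and (2) and (3) fail.
\end{itemize}

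Your fallback does not rescue the statement. Replace $H_{Y_1}$ by the $2$-hypergraph $PG(2,3)$, whose lines have four points, and $\mathrm{Fun}(X,Y_1)$ by $\Hom(H_X,H_{Y_1})$. The same conclusion holds, which refutes Lemma~\ref{lem02weakimpliesstrong} itself. Its last step infers $g(r)=l_1$ merely from $g(r)\subseteq l_1$ and $|g(r)|\ge p+1$. The paper's two-line proof of the corollary does not address this point, nor the existence of the set $V$, so your suspicion was justified.

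\textbf{What your Steps~1--2 do establish.} They give the corollary with ``weak epimorphism'' in (2) and ``epimorphism'' in (3).

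Strongness can be recovered under an extra size hypothesis, using a corrected form of your idea. Suppose every edge $l_1$ of $H_{Y_1}$ satisfies $|l_1|\le|e_1|$ for some line $e_1$ of $H_{X_1}$. Map $e_1$ \emph{onto} $l_1$ and the rest of $X_1$ into $l_1$; mapping all of $X_1$ onto $l_1$ does not help. A lift $\psi$ sends the line $f^{-1}(e_1)$ into some edge $r$, so $g(r)\supseteq l_1$. Since $g$ is a homomorphism, $g(r)$ lies in an edge sharing at least three vertices with $l_1$. By $2$-definability that edge is $l_1$, so $g(r)=l_1$.

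\textbf{Two smaller points in Step~1.}
\begin{itemize}
\item Bullet~2 inherits the circularity of Case~3 of Theorem~\ref{thm:main-general}, which assumes $\ppi$ is injective on right zeros. Bijectivity of $f$ is really built into the definition of $f^2$.
\item Bullet~3 is unjustified as written, since $f^{-1}$ of a triangle may be collinear. Argue instead as follows. If every triangle went to a collinear triple, then for distinct $a,b$ on a line, all of $f(X)$ would lie on the line through $f(a)$ and $f(b)$. This contradicts surjectivity of $f$ together with (A3).
\end{itemize}
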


\begin{proof}
Any projective or affine plane with more than four points is a $2$-hypergraph
(see~\cite{Karteszi2014,Hartshorne2009}). An isomorphism of $2$-hypergraphs
between planes is precisely a collineation.
\end{proof}

\section{Applications}\label{sec4applications}

In this section,  we derive four consequences of
Theorem~\ref{thm:main-general}.
The first result reveals a fundamental rigidity property of universal
hypergraphic automata under epimorphisms. The second and third apply the
characterisation to automata whose hypergraphs are classical geometric
structures. The fourth places our results in the broader categorical
framework of epi-mono factorisations.
Thus, all hypergraphs are assumed to be effective with
$p$-definable edges unless otherwise stated.

According to Theorem~\ref{thm:main-general} it follows that the state
hypergraph of a universal hypergraphic automaton cannot be non-trivially
compressed by any epimorphism. Thus, the state component of every epimorphism must
be an isomorphism. We call among Proposition~\ref{prop0state0rigid} this property \emph{state rigidity}.

\begin{proposition}~\label{prop0state0rigid}
Let $H_X, H_Y, H_{X_1}, H_{Y_1}$ be effective hypergraphs with $p$-definable
edges, and let $\gamma = (f, \ppi, g)$ be an epimorphism from $\Atm(H_X, H_Y)$
to $\Atm(H_{X_1}, H_{Y_1})$. Then, 
\begin{enumerate}
  \item $f\colon H_X \to H_{X_1}$ is an isomorphism; in particular,
        $|X| = |X_1|$ and $|L_X| = |L_{X_1}|$.
  \item The only compression permitted by any epimorphism occurs on the
        output hypergraph where $g\colon H_Y \twoheadrightarrow H_{Y_1}$ is a
        surjective hypergraph homomorphism with $|Y_1| \leq |Y|$ and
        $|L_{Y_1}| \leq |L_Y|$.
  \item The image automaton $\Atm(H_{X_1}, H_{Y_1})$ is completely
        determined, up to isomorphism of its state hypergraph, by the kernel
        partition of $g$ on the output hypergraph.
\end{enumerate}
\end{proposition}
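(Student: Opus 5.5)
The plan is to reduce everything to Theorem~\ref{thm:main-general}, applied in the direction Condition~3 $\Rightarrow$ Condition~2. The epimorphism $\gamma=(f,\ppi,g)$ is assumed to come from mappings $f\colon X\to X_1$ and $g\colon Y\to Y_1$ with $\ppi=(f^2,f\times g)$, as in the problem statement. By Definition~\ref{def:epi-automaton}, $f$ is surjective and $\ppi$ is a surjective semigroup homomorphism. The theorem additionally requires the witness set $V=\{v_1,\dots,v_{p+1}\}\subseteq X$ whose $p$-subsets are bounded but whose image $f(V)$ is unbounded. I will read this as a standing hypothesis inherited from the theorem, since the proposition says all hypergraphs are as in Section~\ref{sec4applications}. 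If it is not available, I would instead try to run Cases~1--3 of the theorem's proof, which do not use $V$, and handle the inverse-homomorphism step by hand. That step is the main obstacle: without $V$ (or some substitute obtained from the effectiveness and $p$-definability of $H_{X_1}$), I do not see how to show that $f^{-1}$ preserves boundedness.

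Granting the theorem, $f$ is an isomorphism and $g$ is a weak epimorphism. For part~(1), a bijection $f$ with $l\in L_X\Leftrightarrow f(l)\in L_{X_1}$ induces a bijection $L_X\to L_{X_1}$, $l\mapsto f(l)$, so $|X|=|X_1|$ and $|L_X|=|L_{X_1}|$.

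For part~(2), surjectivity of $g$ gives $|Y_1|\le|Y|$ immediately. The edge inequality $|L_{Y_1}|\le|L_Y|$ needs a surjection $L_Y\to L_{Y_1}$, which a weak epimorphism alone does not supply. So I would add the strong epimorphism property and invoke Lemma~\ref{lem02weakimpliesstrong}, whose surjectivity hypothesis on $f\times g$ holds because $\ppi$ is onto. That lemma requires $H_{Y_1}$ to be a $p$-hypergraph, so in general I expect to state part~(2) for edges only under that assumption. Alternatively, I would interpret $|L_{Y_1}|$ as the edge set of the image hypergraph $g(H_Y)$ of Definition~\ref{def:image-hypergraph}, where $l\mapsto g(l)$ is onto by definition. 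The claim that "the only compression" occurs on $Y$ is just the rigidity of $f$ from part~(1).

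For part~(3), the state hypergraph $H_{X_1}$ is determined up to isomorphism by $f$, namely $H_{X_1}\cong H_X$. The output hypergraph satisfies $H_{Y_1}=g(H_Y)\cong H_Y/g$ by the canonical isomorphism of Definition~\ref{def:quotient-hypergraph}, once $g$ is strong, so it is determined by $(\ker_V(g),\ker_E(g))$. Hence $\Atm(H_{X_1},H_{Y_1})\cong\Atm(H_X,H_Y/g)$, and an isomorphism of state and output hypergraphs induces an isomorphism of universal automata via Lemma~\ref{lem01sufficient} applied with bijective $g$. The honest caveat is that "kernel partition" must include the edge kernel, or else $H_{Y_1}$ must be a $p$-hypergraph so that edges are recovered from vertices.
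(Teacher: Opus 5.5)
Your route is the paper's own. Its proof quotes Theorem~\ref{thm:main-general} for (1) and (2) without ever mentioning the set $V$. It justifies $|L_{Y_1}|\le|L_Y|$ by the claim that every edge of $H_{Y_1}$ contains some $g(l)$, which does not follow from $g$ being a homomorphism. It also asserts $H_{Y_1}=H_Y/(\ker_V(g),\ker_E(g))$ without any strongness argument. So your caveats are well placed, but the gap is larger than you locate it.

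\textbf{Part (1) is false without the hypothesis on $V$.} Take $p=1$, $H_X=H_Y=(\{1,2,3,4\},\{\{1,2\},\{3,4\}\})$, $H_{X_1}=H_{Y_1}=(\{u,w\},\{\{u,w\}\})$, and $f=g$ with $1,2\mapsto u$ and $3,4\mapsto w$. Every $\varphi\in\End H_X$ and every $\psi\in\Hom(H_X,H_Y)$ maps blocks into blocks, so $f\circ\varphi$ and $g\circ\psi$ factor through $f$. Sending $(\varphi,\psi)$ to the pair of factored maps is a semigroup homomorphism satisfying both compatibility identities. It is onto, because every map $\{u,w\}\to\{u,w\}$ lifts to a blockwise constant map. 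Hence $\gamma$ is an epimorphism in the sense of Definition~\ref{def:epi-automaton}, yet $|X|=4\neq 2=|X_1|$.

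This also breaks your fallback already at Case~3 of the theorem's proof, not only at the inverse-homomorphism step. That case assumes $\ppi$ is injective on right zeros, which is exactly what is to be shown; here $\ppi(C_1,C_1)=\ppi(C_2,C_2)=(C_u,C_u)$. So a hypothesis like the one on $V$ must be added to the statement. With it, injectivity can be proved honestly from compatibility. If $f(x)=f(x')$ with $x\neq x'$, the endomorphisms $X\to l$ sending $x\mapsto c$ and $x'\mapsto d$ force $f(c)=f(d)$ whenever $c,d$ share an edge. For $p\ge 2$ this collapses $f(V)$ to a single point, contradicting its unboundedness; $p=1$ needs one more step.

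\textbf{For (2), do not route through strongness or Lemma~\ref{lem02weakimpliesstrong}.} Once $f$ is an isomorphism, the edge inequality holds for any $H_{Y_1}$ with $p$-definable edges:
\begin{enumerate}
\item Given $l_1\in L_{Y_1}$, pick an edge $e\in L_{X_1}$ and a map $\psi_1\colon X_1\to l_1$ that is injective on $p+1$ points of $e$.
\item Lifting $(\Delta_{X_1},\psi_1)$ through $\ppi$ gives $\psi$ with $g\circ\psi=\psi_1\circ f$.
\item Then $\psi(f^{-1}(e))$ lies in some $r\in L_Y$, and $g(r)$ is bounded and contains $p+1$ points of $l_1$.
\item By $p$-definability, $g(r)\subseteq l_1$, so $g(r)$ determines $l_1$. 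Choosing one such $r$ per $l_1$ gives an injection $L_{Y_1}\to L_Y$.
\end{enumerate}

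\textbf{Strongness of $g$ does not follow, even when $H_{Y_1}$ is a $p$-hypergraph.} The last step of Lemma~\ref{lem02weakimpliesstrong}, ``$g(r)$ contains $p+1$ vertices of $l_1$, so $g(r)=l_1$'', is a non sequitur. Take $H_X=H_{X_1}$ the Fano plane, $f=\Delta_X$, and $H_{Y_1}=PG(2,3)$. Let $H_Y$ be the disjoint union of copies of the Fano plane indexed by $\Hom(H_{X_1},H_{Y_1})$, with $g$ acting as $\psi_1$ on the copy indexed by $\psi_1$. Every $\psi_1$ lifts to the embedding onto its copy, and the constant maps make $g$ onto. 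So $\gamma$ is an epimorphism, and the hypothesis on $V$ holds for a triangle. But $3$-point edges cannot map onto $4$-point lines, so $g$ is not strong and $H_{Y_1}\neq g(H_Y)$.

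Consequently the identification $H_{Y_1}\cong H_Y/g$, which you and the paper both use for (3), is not available from the hypotheses. Part (3) needs strongness of $g$ as an added assumption, or a weaker formulation.
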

\begin{proof}
Recall Theorem~\ref{thm:main-general}, then Statement~1 follows immediately
Condition~2 as every epimorphism $\gamma = (f,\ppi,g)$ requires $f$ to be an
isomorphism $H_X \xrightarrow{\sim} H_{X_1}$. Then, we noticed that an isomorphism preserves both
the vertex set cardinality and the edge family cardinality, giving
$|X| = |X_1|$ and $|L_X| = |L_{X_1}|$.

Thus, for Statement~2, by establishing Theorem~\ref{thm:main-general} it requires $g$ to be a weak epimorphism, hence a surjective hypergraph homomorphism. Then, surjectivity gives
$g(Y) = Y_1$, and $|Y_1| \leq |Y|$. Since $g$ is a homomorphism, every edge
$l_1 \in L_{Y_1}$ satisfies $l_1 \supseteq g(l)$ for some $l \in L_Y$,
so $|L_{Y_1}| \leq |L_Y|$. 

Finally, assume two vertices $y, y' \in Y$ are identified in $H_{Y_1}$
precisely when $g(y) = g(y')$, and two edges $l, l' \in L_Y$ merge when
$g(l) = g(l')$. Then, the kernel congruences $\ker_V(g)$ on $Y$ and $\ker_E(g)$
on $L_Y$ completely determine the combinatorial structure of $H_{Y_1} =
H_Y / (\ker_V(g), \ker_E(g))$. Since $f$ is an isomorphism (Statement~1),
the state component of $\Atm(H_{X_1}, H_{Y_1})$ is determined up to
isomorphism by $H_X$ alone. Hence $\Atm(H_{X_1}, H_{Y_1})$ is determined,
up to isomorphism of its state hypergraph, by the kernel partition of $g$ which Statement~3 holds.
\end{proof}
Proposition~\ref{prop0state0rigid} has a natural interpretation: in universal
hypergraphic automata, the \emph{state structure is preserved} by any
surjective morphism, while the \emph{output structure may be compressed}. This
contrasts sharply with the situation for general (non-universal) automata,
where epimorphisms may identify states. The rigidity of the state component is
therefore a distinctive feature of the universality of $\Atm(H_X, H_Y)$.

We now apply Corollary~\ref{cor01planes} to obtain a concrete group-theoretic
description of all epimorphisms between automata whose state hypergraphs are
classical finite planes. 
Recall that for a prime power $q$, the \emph{projective plane} $PG(2,q)$
has vertex set of size $q^2+q+1$ and edges (lines) of size $q+1$, and is a
$2$-hypergraph. Its automorphism group is the projective semilinear group
$P\Gamma L(3,q)$~\cite{Hartshorne2009}. Similarly, the \emph{affine plane}
$AG(2,q)$ has $q^2$ vertices and $q^2+q$ edges of size $q$, and its
automorphism group is the affine semilinear group $A\Gamma L(2,q)$.

\begin{proposition}~\label{prop1proj1planes}
Let $H_X = PG(2,q)$ and $H_{X_1} = PG(2,q)$ be the same projective plane of
order $q$ (viewed as a $2$-hypergraph), and let $H_Y$, $H_{Y_1}$ be effective
hypergraphs with $2$-definable edges. Let $f\colon X \to X_1$ be a surjection
and $g\colon Y \to Y_1$ a mapping. Then $\gamma$ is an
epimorphism from $\Atm(H_X, H_Y)$ to $\Atm(H_{X_1}, H_{Y_1})$ if and only if:
\begin{enumerate}
  \item $f \in P\Gamma L(3,q)$, i.e., $f$ is a collineation of $PG(2,q)$;
  \item $g$ is a strong epimorphism from $H_Y$ to $H_{Y_1}$.
\end{enumerate}
Consequently, the group of state-component automorphisms of $\Atm(PG(2,q),
H_Y)$ induced by epimorphisms is isomorphic to $P\Gamma L(3,q)$.
\end{proposition}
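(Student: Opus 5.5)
The plan is to derive the statement from Corollary~\ref{cor01planes}, specialised to $H_X = H_{X_1} = PG(2,q)$, which is a $2$-hypergraph and hence an effective hypergraph with $2$-definable edges. Here $\gamma$ means the triple $(f,\ppi,g)$ with $\ppi = (f^2, f\times g)$. The corollary needs $H_X, H_{X_1}$ to be planes viewed as $2$-hypergraphs; $PG(2,q)$ has $q^2+q+1 \ge 7 > 4$ points, so this holds. It then gives: $\ppi$ is an epimorphism iff $f$ is a collineation and $g$ is a strong epimorphism, iff $\gamma$ is a strong epimorphism of automata. The remaining work is to match the two conditions in the proposition and to justify the group-theoretic conclusion.

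\emph{Forward direction.} Suppose $\gamma$ is an epimorphism of automata. By Definition~\ref{def:epi-automaton}, $\ppi$ is surjective, so Condition~1 of Corollary~\ref{cor01planes} holds. Hence $f$ is a collineation of $PG(2,q)$ and $g$ is a strong epimorphism. By the fundamental theorem of projective geometry, the collineation group of $PG(2,q)$ is $P\Gamma L(3,q)$ (cited from~\cite{Hartshorne2009}), which gives condition~(1).

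\emph{Reverse direction.} Suppose $f \in P\Gamma L(3,q)$ and $g$ is a strong epimorphism. Then $f$ is an isomorphism of $H_X$ onto $H_{X_1}$, and $g$ is in particular a weak epimorphism. Lemma~\ref{lem01sufficient}(2) then gives directly that $\gamma$ is an epimorphism, without even needing the strong condition.

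\emph{Main obstacle.} The hard part is the forward use of the corollary, which upgrades $g$ from weak to strong. That upgrade ultimately rests on Lemma~\ref{lem02weakimpliesstrong}, which requires $H_{Y_1}$ to be a $p$-hypergraph. Here $H_{Y_1}$ is only assumed effective with $2$-definable edges, so I would check whether the corollary can be applied as stated; if not, I would weaken condition~(2) to "weak epimorphism" via Theorem~\ref{thm:main-general}. To use that theorem I would also verify its hypothesis on $PG(2,q)$: there should be a triangle $V=\{v_1,v_2,v_3\}$ whose $2$-subsets are bounded (they lie on lines) but which is not collinear, and any three non-collinear points work. The hypothesis, however, asks that $f(V)$ be unbounded. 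This holds once $f$ is known to be a collineation, but a priori it is circular. I would resolve this first, by noting that in Case~4 of the theorem's proof only the existence of a non-collinear triple mapped to a non-collinear triple is needed, and surjectivity of $f$ onto a plane supplies one.

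\emph{Final assertion.} The epimorphisms with $H_{Y_1}=H_Y$ and $g=\Delta_Y$ realise every $f \in P\Gamma L(3,q)$, by the reverse direction. Composition of such $\gamma$ corresponds to composition of the state components $f$. Hence $f \mapsto (f,\ppi,\Delta_Y)$ identifies the induced group of state components with $P\Gamma L(3,q)$.
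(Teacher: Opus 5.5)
Your approach is the paper's own: its proof of this proposition simply invokes Corollary~\ref{cor01planes} and identifies isomorphisms of $PG(2,q)$ with collineations. You are right that nothing in the paper delivers the \emph{strong} conclusion here, since Lemma~\ref{lem02weakimpliesstrong} needs $H_{Y_1}$ to be a $p$-hypergraph. Your proposal leaves this open, and it cannot be closed, because the ``only if'' direction is false in this generality. Put $n=q^2+q+1$ and let $H_{Y_1}$ have the single edge $Y_1$ with $|Y_1|>n$. Let the edges of $H_Y$ be pairwise disjoint copies $m_T$ of all $n$-element subsets $T\subseteq Y_1$, let $g$ be the projection, and let $f=\Delta_X$. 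Every map $X_1\to Y_1$ is a homomorphism. Its image lies in some $T$, and it lifts through the single edge $m_T$ to a homomorphism into $H_Y$. So $\ppi$ is surjective and $\gamma$ is an epimorphism. Yet no edge of $H_Y$ is mapped onto $Y_1$, so $g$ is not strong. (Your side remark is correct: preimages of any non-collinear triple satisfy the hypothesis of Theorem~\ref{thm:main-general}, so there is no circularity there.)

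Your reverse direction also fails, and with it your fallback of weakening (2) to ``weak epimorphism''. The broken step is the lifting in Lemma~\ref{lem01sufficient}. Choosing a $g$-preimage of $\psi_1(f(x))$ point by point gives a map $\psi$, but nothing makes it a homomorphism, because the choice at $x$ must suit every line through $x$ at once. Take $H_{Y_1}=PG(2,q)$, let $H_Y$ be the disjoint union of copies $m_s$ of its lines $s$, let $g$ be the projection (a strong epimorphism), and put $f=\Delta_X$ and $\psi_1=\Delta_{X_1}$. A homomorphism $\psi$ with $g\circ\psi=\psi_1$ sends each line $s$ into some $m_t$ with $s\subseteq g(m_t)=t$, hence into $m_s$. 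A point on two lines would then lie in two disjoint copies. So $(\Delta_{X_1},\Delta_{X_1})$ has no $\ppi$-preimage, although (1) and (2) hold.

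What can be salvaged: (1) is necessary, by a direct argument. Bijectivity of $f$ is built into the definition of $f^2$. Requiring $f^2(\varphi)\in\End H_{X_1}$ for a map $\varphi\colon X\to l$ that sends $f^{-1}(e)$ onto a line $l$ forces $f(l)$ into a line, hence onto one by counting. The correct substitute for (2) is that every homomorphism $PG(2,q)\to H_{Y_1}$ lifts through $g$ to a homomorphism into $H_Y$; this neither implies nor is implied by $g$ being strong. Your final group-theoretic claim survives, but prove it directly: for $g=\Delta_Y$ the lift of $\psi_1$ is $\psi_1\circ f$, with no appeal to Lemma~\ref{lem01sufficient}.
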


\begin{proof}
The projective plane $PG(2,q)$ is a $2$-hypergraph for any prime power
$q$~\cite{Karteszi2014}. By Corollary~\ref{cor01planes} (with $p = 2$),
$\gamma$ is an epimorphism if and only if $f$ is an isomorphism of
$2$-hypergraphs $PG(2,q) \to PG(2,q)$ and $g$ is a strong epimorphism of the
output hypergraph. An isomorphism of the $2$-hypergraph $PG(2,q)$ is precisely
a bijection preserving collinearity in both directions, which is the definition
of a collineation. The full collineation group of $PG(2,q)$ is
$P\Gamma L(3,q)$~\cite{Hartshorne2009}. This gives Conditions~1 and~2.

Finally, the state component maps of epimorphisms from
$\Atm(PG(2,q), H_Y)$ to itself form the group $\Aut(PG(2,q)) =
P\Gamma L(3,q)$ under composition, since every such map is a collineation.
\end{proof}

\begin{corollary}~\label{cor2affine}
Let $H_X = AG(2,q)$ and $H_{X_1} = AG(2,q)$ be the same affine plane of
order $q$ (with $q > 2$), and let $H_Y$, $H_{Y_1}$ be effective hypergraphs
with $2$-definable edges. Then $\gamma = (f, \ppi, g)$ is an epimorphism from
$\Atm(H_X, H_Y)$ to $\Atm(H_{X_1}, H_{Y_1})$ if and only if $f$ is an
element of the affine semilinear group $A\Gamma L(2,q)$ and $g$ is a strong
epimorphism from $H_Y$ to $H_{Y_1}$.
\end{corollary}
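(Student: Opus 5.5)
The plan is to run the proof of Proposition~\ref{prop1proj1planes} again, with $PG(2,q)$ replaced by $AG(2,q)$. The reduction goes through Corollary~\ref{cor01planes} with $p=2$. First I will check that $AG(2,q)$ satisfies the hypotheses of that corollary when $q>2$. It is an effective hypergraph with $2$-definable edges, and in fact a $2$-hypergraph, because:
\begin{itemize}
  \item[(A1)] two distinct points lie on exactly one line;
  \item[(A2)] each line has $q\ge 3$ points;
  \item[(A3)] any three non-collinear points form a $3$-element subset contained in no edge.
\end{itemize}
The condition $q>2$ matches the remark in the preliminaries that affine planes with more than four points are $2$-hypergraphs. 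Under it the cited corollary applies directly.

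Second, I will make sure the extra hypothesis of Theorem~\ref{thm:main-general}, which underlies Corollary~\ref{cor01planes}, is available. We need a set $V=\{v_1,v_2,v_3\}$ whose $2$-subsets are bounded but whose image $f(V)$ is unbounded. Take $V$ to be any three non-collinear points. Each pair lies on a line, and in a plane every pair of points is bounded, so this part is automatic. For the image, note that if $f$ is surjective and is eventually shown to be a collineation, then $f(V)$ stays non-collinear. Since the paper treats Corollary~\ref{cor01planes} as already established for planes, I will simply invoke it.

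Corollary~\ref{cor01planes} then says that $\gamma=(f,\ppi,g)$ is an epimorphism if and only if $f$ is a collineation of $AG(2,q)$ and $g$ is a strong epimorphism of $H_Y$ onto $H_{Y_1}$. The strong and weak notions agree here via Lemma~\ref{lem02weakimpliesstrong}.

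Third, I will identify collineations: an isomorphism of the $2$-hypergraph $AG(2,q)$ is a bijection preserving lines in both directions, which is exactly a collineation. I will then cite the fundamental theorem of affine geometry (cf.~\cite{Hartshorne2009}): for $q>2$ the full collineation group of $AG(2,q)$, when the plane is Desarguesian and coordinatised over $\mathbb{F}_q$, is $A\Gamma L(2,q)$. This gives the claimed equivalence.

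The main obstacle is the hypothesis $q>2$. For $q=2$, $AG(2,2)$ has four points and six $2$-element lines. Every pair is then a line, so axiom (A3) fails and every permutation is a collineation; the group $S_4$ coincides with $A\Gamma L(2,2)$ only by accident. Axiom (A2) also fails, since each line needs at least three points. I will therefore explicitly restrict to $q\ge 3$ in order to apply Corollary~\ref{cor01planes} and the fundamental theorem of affine geometry.
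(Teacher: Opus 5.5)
Your reduction is the paper's own: $AG(2,q)$ with $q>2$ is a $2$-hypergraph, Corollary~\ref{cor01planes} gives the equivalence, and the collineation group is $A\Gamma L(2,q)$. The state-side conclusion is correct. However, your check of the extra hypothesis of Theorem~\ref{thm:main-general} is circular, since it uses that $f$ ``is eventually shown to be a collineation''. Instead, take $V$ to be preimages under the surjection $f$ of three non-collinear points of $X_1$. These are distinct, each pair is bounded because any two points of a plane are collinear, and $f(V)$ is unbounded by construction. You can even bypass the theorem here. Since $X=X_1$ is finite, $f$ is bijective. For a line $e$ of $X_1$, take a map $\varphi\colon X\to l$ sending the $q$-set $f^{-1}(e)$ onto a line $l$. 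Boundedness of $f^2(\varphi)(e)=f(l)$ then forces $f$ to map lines onto lines.

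The genuine gap is on the output side. Your remark that weak and strong agree ``via Lemma~\ref{lem02weakimpliesstrong}'' does not apply. That lemma assumes $H_{Y_1}$ is a $p$-hypergraph, whereas here $H_Y,H_{Y_1}$ are only effective with $2$-definable edges, and Theorem~\ref{thm:main-general} by itself yields only a weak epimorphism. No argument can fill this step, because the $g$-condition fails in both directions.

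(i) Let $H_{Y_1}$ have the single edge $Y_1=\{y_1,\dots,y_{q^2+1}\}$. Let $H_Y$ consist of $q^2+1$ pairwise disjoint edges $r_i$ of size $q^2$, and let $g$ map $r_i$ bijectively onto $Y_1\setminus\{y_i\}$. Take $f=\id$. Every $\psi_1\colon X_1\to Y_1$ misses some $y_i$ (as $|X_1|=q^2$), so it equals $g\circ\psi$ with $\psi$ mapping all of $X$ into the edge $r_i$. Hence $\ppi$ is onto and $\gamma$ is an epimorphism, yet $Y_1$ is the image of no edge.

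(ii) Let $H_{Y_1}=PG(2,q)$, let $H_Y$ be the disjoint union of copies $r_m$ of its lines $m$, and let $g$ be the projection, which is a strong epimorphism. The embedding $\psi_1\colon AG(2,q)\hookrightarrow PG(2,q)$ is a homomorphism. A lift $\psi$ would have to send each affine line $e$ into $r_{\bar e}$, where $\bar e$ is the projective closure, determined by the at least two points of $\psi_1(e)$. For two lines $e\neq e'$ through a point $x$ this gives $\psi(x)\in r_{\bar e}\cap r_{\bar e'}=\emptyset$. So $\ppi$ is not onto, although $f=\id\in A\Gamma L(2,q)$ and $g$ is strong.

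Once $f$ is a collineation, surjectivity of $\ppi$ just says that every homomorphism $H_{X_1}\to H_{Y_1}$ factors as $g\circ\psi\circ f^{-1}$ with $\psi\in\Hom(H_X,H_Y)$. This lifting condition is incomparable with strong epimorphism. The paper's proof makes the same one-line reduction, so the defect is inherited from Corollary~\ref{cor01planes} and from the pointwise lifting in the proof of Lemma~\ref{lem01sufficient}. Still, as a proof of the stated equivalence, your proposal cannot be completed.
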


\begin{proof}
The affine plane $AG(2,q)$ with $q > 2$ is a $2$-hypergraph~\cite{Karteszi2014}.
Corollary~\ref{cor01planes} gives the equivalence, and the automorphism group
of $AG(2,q)$ as a $2$-hypergraph is $A\Gamma L(2,q)$~\cite{Hartshorne2009}.
\end{proof}
In the standard theory of automata over algebraic systems
(see~\cite{Plotkin1992,Eilenberg1974}), a subset $\mathcal{L} \subseteq Y^*$
is \emph{recognisable} by $\Atm(H_X, H_Y)$ if there exists a state
$x_0 \in X$ and a set $F \subseteq Y$ of accepting output symbols such that
$\mathcal{L} = \{w \in S^* : \lambda'(x_0, w) \in F\}$. A direct consequence
of Theorem~\ref{thm:main-general} is the following language-preservation
property: if $\mathcal{L}$ is recognisable by $\Atm(H_X, H_Y)$ via the
accepting set $F \subseteq Y$, and $\gamma = (f, \ppi, g)$ is an epimorphism
onto $\Atm(H_{X_1}, H_{Y_1})$, then the image language $g(\mathcal{L}) =
\{g(y) : y \in \mathcal{L}\}$ is recognisable by $\Atm(H_{X_1}, H_{Y_1})$
via the accepting set $g(F) \subseteq Y_1$. Since $g$ is surjective,
every recognisable language of $\Atm(H_{X_1}, H_{Y_1})$ arises in this way,
so epimorphisms provide a complete and exact transfer of recognisable languages
between universal hypergraphic automata.

We place our results in a categorical context. Let $\mathbf{HAtm}$ denote the
category whose objects are universal hypergraphic automata $\Atm(H_X, H_Y)$
over effective hypergraphs with $p$-definable edges, and whose morphisms are
triples $\gamma = (f, \ppi, g)$ as defined in Section~\ref{sec03results}.

\begin{proposition}~\label{prop0epi1mono}
Every morphism $\gamma = (f, \ppi, g)$ in $\mathbf{HAtm}$ admits a
factorisation
\[
  \Atm(H_X, H_Y)
  \;\xrightarrow{\;\varepsilon\;}
  \Atm(H_{X_1}, H_{Y'})
  \;\xrightarrow{\;\mu\;}
  \Atm(H_{X_1}, H_{Y_1}),
\]
where $\varepsilon = (f, \ppi', g')$ is an epimorphism and
$\mu = (\id_{X_1}, \ppi'', h)$ is a monomorphism.
\end{proposition}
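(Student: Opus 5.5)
The factorisation will be built through the image of the output map. Given $\gamma=(f,\ppi,g)$, set $Y'=g(Y)\subseteq Y_1$ and let $H_{Y'}=g(H_Y)$ be the image hypergraph of Definition~\ref{def:image-hypergraph}, with edges $\{g(l):l\in L_Y\}$. Then $g$ corestricts to a surjection $g'\colon Y\to Y'$, and $h\colon Y'\hookrightarrow Y_1$ is the inclusion, so $g=h\circ g'$.

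The first thing to check is that $H_{Y'}$ is again an object of $\mathbf{HAtm}$, i.e.\ effective with $p$-definable edges. Effectiveness is immediate, since every $y'=g(y)$ lies in $g(l)$ for an edge $l\ni y$. $p$-definability is the main obstacle: images of edges may have fewer than $p+1$ points, and two images may share $p$ points. My first attempt is to pass to the maximal images, or equivalently take as edges the traces $l_1\cap Y'$ for those $l_1\in L_{Y_1}$ with $|l_1\cap Y'|\ge p+1$. This inherits the intersection bound from $H_{Y_1}$. It also keeps $g'$ a homomorphism, because $g(l)\subseteq l_1$ for some $l_1$. It remains to ensure effectiveness in this version. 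Where that fails, I fall back on the description from Proposition~\ref{prop0state0rigid}: $H_{Y'}\cong H_Y/g$, and I would impose the needed restriction implicitly, as the paper's category does.

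For the epimorphism $\varepsilon=(f,\ppi',g')$ I need $f$ to be an isomorphism, by Theorem~\ref{thm:main-general}. Under the standing conventions of Section~\ref{sec4applications}, the morphisms whose state map is bijective are the ones the triple formalism $(f^2,f\times g)$ handles. So $f^{-1}$ exists and I define $\ppi'=(f^2,f\times g')$. Lemma~\ref{lem01sufficient} then says directly that $\ppi'$ is a surjective semigroup homomorphism and that $\varepsilon$ is an epimorphism of automata, since $g'$ is a weak epimorphism.

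For $\mu=(\id_{X_1},\ppi'',h)$ put $\ppi''(\chi,\psi')=(\chi,h\circ\psi')$. This lands in $S(H_{X_1},H_{Y_1})$ because $h$ is a homomorphism. It is multiplicative because $h$ composes on the left of the second coordinate, and it is injective because $h$ is. The two compatibility identities are trivial. Finally I verify $\mu\circ\varepsilon=\gamma$ componentwise: $\id\circ f=f$, $h\circ g'=g$, and $h\circ g'\circ\psi\circ f^{-1}=g\circ\psi\circ f^{-1}$, so $\ppi''\ppi'=\ppi$. This completes the factorisation, modulo the $p$-definability issue for $H_{Y'}$ identified above.
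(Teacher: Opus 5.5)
Your construction is the paper's own: the image hypergraph $H_{Y'}=g(H_Y)$, the corestriction $g'$, the inclusion $h$, Lemma~\ref{lem01sufficient} (the implication used inside Theorem~\ref{thm:main-general}) for $\varepsilon$, and a direct check for $\mu$. Your verification of $\mu$ and of $\ppi''\ppi'=\ppi$ is fine. However, the two points you wave through are genuine gaps, and the paper's proof has the same two holes: it asserts that $g(H_Y)$ has $p$-definable edges ``since $g$ is a homomorphism'', and it invokes Theorem~\ref{thm:main-general} without knowing that $f$ is an isomorphism. Neither hole can be closed, because the proposition is false as stated.

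\textbf{Bijectivity of $f$.} This is not a convention of $\mathbf{HAtm}$; its morphisms are arbitrary compatible triples. For example, take $f=C_{a_1}$, $g=C_{b_1}$, and $\ppi$ the constant map onto the idempotent right zero $(C_{a_1},C_{b_1})$. Both compatibility identities hold, so $\gamma$ is a morphism. But $\varepsilon$ must have $f$ itself as state component, and $f$ is not surjective since $|X_1|\geq p+1\geq 2$. So no factorisation of the required shape exists under Definition~\ref{def:epi-automaton}.

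\textbf{The intermediate object.} Restricting to bijective $f$ (where compatibility does force $\ppi=(f^2,f\times g)$, as you use implicitly) does not help. Since $g'$ is onto and $h$ is injective, $|Y'|=|g(Y)|$. For $f=\id_X$, $g=C_{b_1}$, $\ppi(\varphi,\psi)=(\varphi,C_{b_1})$, the intermediate output hypergraph would have a single vertex, and a single vertex carries no effective hypergraph with $p$-definable edges. Your trace hypergraph does not even keep $g'$ a homomorphism in general: if $g(l)$ consists of exactly $p$ points and equals $l_1\cap Y'$, that trace is discarded, and by $p$-definability no other edge contains $g(l)$.

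\textbf{Surjectivity of $\ppi'$.} Even when $f$ is an isomorphism and $g(H_Y)$ is a legitimate object, Lemma~\ref{lem01sufficient} does not deliver surjectivity of $\ppi'$. Its proof lifts $\psi_1$ through $g$ pointwise, and the lift need not be a homomorphism. Take $p=2$, $f=\id$, and $H_X=H_{X_1}$ with edges $\{1,2,3\},\{3,4,5\}$. Let $H_Y$ have edges $\{y_1,y_2,y_3\},\{y_4,y_5,y_6\}$, let $g$ glue $y_3,y_4$ into $\bar y$, and put $H_{Y_1}=g(H_Y)$. The map $1,2,3,4,5\mapsto y_1,y_2,\bar y,y_5,y_6$ lies in $\Hom(H_{X_1},H_{Y_1})$. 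A homomorphic lift would have to send $3$ to $y_3$ (because of the edge $\{1,2,3\}$) and to $y_4$ (because of $\{3,4,5\}$). Here $h$ injective and $h,g'$ homomorphisms force $H_{Y'}=g(H_Y)$. So this $\gamma$, with $f$ an isomorphism and $g$ even a strong epimorphism, admits no such factorisation at all.

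What your argument does establish is the algebraic factorisation $\ppi=\ppi''\ppi'$ for bijective $f$. To make $\varepsilon$ an epimorphism you must add two hypotheses: that $g(H_Y)$ is effective with $p$-definable edges, and that $f\times g'$ maps $\Hom(H_X,H_Y)$ onto $\Hom(H_{X_1},H_{Y'})$.
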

\begin{proof}
Assume that the \emph{image hypergraph} $H_{Y'} = (Y', L_{Y'})$ where
$Y' = g(Y)$ and $L_{Y'} = \{g(l) : l \in L_Y\}$.
Since $g$ is a homomorphism, $H_{Y'}$ is a well-defined hypergraph with
$p$-definable edges.
Consider $g'\colon H_Y \to H_{Y'}$ by $g'(y) = g(y)$ where $g'$ is surjective on
vertices by construction, and maps every edge $l \in L_Y$ to $g(l) \in L_{Y'}$
by definition of $L_{Y'}$; hence $g'$ is a strong epimorphism. According to
Theorem~\ref{thm:main-general}, with the induced $\ppi'$, the triple
$\varepsilon = (f, \ppi', g')$ is an epimorphism from $\Atm(H_X, H_Y)$ to
$\Atm(H_{X_1}, H_{Y'})$.

Thus, assume that $h\colon H_{Y'} \to H_{Y_1}$ by $h(g(y)) = g(y)$. Then $h$ is injective and maps edges of $H_{Y'}$ into bounded
sets of $H_{Y_1}$, so $h$ is a monomorphism of hypergraphs. According to the
monomorphism characterisation~\cite{Khvorostukhina2025}, with $\id_{X_1}$
an isomorphism and $h$ a monomorphism, the triple
$\mu = (\id_{X_1}, \ppi'', h)$ is a monomorphism in $\mathbf{HAtm}$.
Since $\mu \circ \varepsilon = \gamma$ by construction, and any two such
factorisations are connected by a unique isomorphism of the intermediate
object (by the standard argument for orthogonal epi-mono systems), the
factorisation is unique up to isomorphism.
\end{proof}
Actually, based on Proposition~\ref{prop0epi1mono} shows that the category $\mathbf{HAtm}$
admits an \emph{orthogonal epi-mono factorisation system}.  
The explicit description of the intermediate object as the image hypergraph
automaton $\Atm(H_{X_1}, g(H_Y))$ is made possible precisely because
Theorem~\ref{thm:main-general} and~\cite{Khvorostukhina2025} together give a
complete algebraic description of both epimorphisms and monomorphisms in
$\mathbf{HAtm}$.

\section*{Conclusion}

This paper characterises epimorphisms between universal hypergraphic automata $\Atm$ and their input symbol semigroups $S(H_X, H_Y)$ for effective hypergraphs with $p$-definable edges.
The main contribution is the introduction of weak and strong epimorphism for hypergraphs, which are different in general but must coincide for $p$-hypergraphs, projective and affine planes. 
The main results
give necessary and sufficient conditions for a triple $\gamma = (f, \ppi, g)$
to be an epimorphism of automata entirely in terms of the component maps $f$
and $g$ where $f$ must always be an isomorphism of the state hypergraph (state
rigidity), while $g$ must be a weak or strong epimorphism of the output
hypergraph depending on the geometric structure of the hypergraphs involved.
These findings, together with previous studies on isomorphisms~\cite{Khvorostukhina2017}and monomorphisms~\cite{Khvorostukhina2025}, provide a comprehensive description of structure-preserving mappings between universal hypergraphic automata. The applications in Section~\ref{sec4applications} show that the results have multiple implications, including the state-rigidity of universal hypergraphic automata, the group-theoretic description of epimorphisms over projective and affine planes (parameterised by $P\Gamma L(3,q)$ and $A\Gamma L(2,q)$), the exact transfer of recognisable languages under epimorphisms, and the existence of an orthogonal epi-mono factorisation system in the category $\mathbf{HAtm}$.

Several directions remain open and are natural continuations of the present
work. First, the structure of \emph{congruences} on $S(H_X, H_Y)$ and the
corresponding quotient automata have not been studied; this would require
understanding which equivalence relations on $\End H_X \times \Hom(H_X, H_Y)$
are compatible with the semigroup multiplication, a problem expected to be
considerably more involved than the epimorphism problem. The endomorphism monoid $\End H$ of an effective hypergraph with $p$-definable edges, including Green's relations, idempotents, and rank function, is mainly unexplored and of independent importance. The functor $(H_X, H_Y) \mapsto \Atm(H_X, H_Y)$ and its preservation of limits or colimits in the category of hypergraphs remains an open categorical subject.
Extending the current conclusions to partial or non-effective hypergraphs would greatly expand the theory's scope. 

\section*{Declarations}
\begin{itemize}
    \item \textbf{Funding:} The author received no external funding for this research.
    \item \textbf{Conflict of interest / Competing interests:} The author declare that they have no conflicts of interest or competing interests related to this study.
    \item \textbf{Ethics approval and consent to participate:} Not applicable.
    \item \textbf{Data availability statement:} All data is included within the manuscript.
\end{itemize}


\begin{thebibliography}{20}

\bibitem{Bretto2013}
A.~Bretto,
\emph{Hypergraph Theory: An Introduction},
(Springer, Cham, 2013).
{\textsc{doi}}: \href{https://doi.org/10.1007/978-3-319-00080-0}%
     {10.1007/978-3-319-00080-0}

\bibitem{Chiari2024}
M.~Chiari, D.~Mandrioli, F.~Pontiggia, and M.~Pradella,
\emph{Model checking probabilistic operator precedence automata},
arXiv preprint (2024), arXiv:2404.03515.
{\textsc{url}}: \href{https://arxiv.org/abs/2404.03515}%
     {arxiv.org/abs/2404.03515}

\bibitem{Dennunzio2020ICALP}
A.~Dennunzio, E.~Formenti, D.~Grinberg, and L.~Margara,
\emph{From linear to additive cellular automata},
in Proc.\ 47th Int.\ Colloquium on Automata, Languages, and Programming
(ICALP~2020), LIPIcs \textbf{168} (2020), 125:1--125:13.
{\textsc{doi}}: \href{https://doi.org/10.4230/LIPIcs.ICALP.2020.125}%
     {10.4230/LIPIcs.ICALP.2020.125}

\bibitem{Dennunzio2021}
A.~Dennunzio, E.~Formenti, D.~Grinberg, and L.~Margara,
\emph{An efficiently computable characterization of stability and instability
for linear cellular automata},
J.\ Comput.\ Syst.\ Sci.\ \textbf{119} (2021), 63--71.
{\textsc{doi}}: \href{https://doi.org/10.1016/j.jcss.2021.03.003}%
     {10.1016/j.jcss.2021.03.003}

\bibitem{Eilenberg1974}
S.~Eilenberg,
\emph{Automata, Languages, and Machines}, Vol.~A,
(Academic Press, New York, 1974).

\bibitem{Fijalkow2022}
N.~Fijalkow, C.~Riveros, and J.~Worrell,
\emph{Probabilistic automata of bounded ambiguity},
Information and Computation \textbf{282} (2022), 104648.
{\textsc{doi}}: \href{https://doi.org/10.1016/j.ic.2020.104648}%
     {10.1016/j.ic.2020.104648}

\bibitem{Geissler2025}
D.~Gei{\ss}ler and T.~Winkler,
\emph{Weighted automata for exact inference in discrete probabilistic programs},
in Theoretical Aspects of Computing -- ICTAC~2025,
Lecture Notes in Computer Science \textbf{16237} (2025).
{\textsc{doi}}: \href{https://doi.org/10.1007/978-3-032-11176-0_16}%
     {10.1007/978-3-032-11176-016}

\bibitem{Hartshorne2009}
R.~Hartshorne,
\emph{Foundations of Projective Geometry},
(Ishi Press, New York, 2009).

\bibitem{Hensel2022}
C.~Hensel, S.~Junges, J.-P.~Katoen, T.~Quatmann, and M.~Volk,
\emph{The probabilistic model checker Storm},
Int.\ J.\ Softw.\ Tools Technol.\ Transfer \textbf{24} (2022), no.~4,
589--610.
{\textsc{doi}}: \href{https://doi.org/10.1007/S10009-021-00633-Z}%
     {10.1007/S10009-021-00633-Z}

\bibitem{Henzinger2025}
T.~A.~Henzinger, A.~Prakash, and K.~S.~Thejaswini,
\emph{Resolving nondeterminism with randomness},
in Proc.\ 50th Int.\ Symp.\ Mathematical Foundations of Computer Science
(MFCS~2025), LIPIcs (2025).
{\textsc{doi}}: \href{https://doi.org/10.4230/LIPIcs.MFCS.2025.57}%
     {10.4230/LIPIcs.MFCS.2025.57}

\bibitem{Karteszi2014}
F.~Karteszi,
\emph{Introduction to Finite Geometries},
(North Holland, 2014).

\bibitem{Khvorostukhina2017}
E.~V.~Khvorostukhina and V.~A.~Molchanov,
\emph{On problem of concrete characterization of universal automata},
Lobachevskii J.\ Math.\ \textbf{38} (2017), 664--669.
{\textsc{doi}}: \href{https://doi.org/10.1134/S1995080217040114}%
     {10.1134/S1995080217040114}

\bibitem{Khvorostukhina2025}
E.~V.~Khvorostukhina,
\emph{On monomorphisms of universal hypergraphic automata},
Lobachevskii J.\ Math.\ (2025).
{\textsc{doi}}: \href{https://doi.org/10.1134/S1995080225614109}%
     {10.1134/S1995080225614109}

\bibitem{Lallement1979}
G.~Lallement,
\emph{Semigroups and Combinatorial Applications},
(Wiley, New York, 1979).

\bibitem{Malcev1973}
A.~I.~Malcev,
\emph{Algebraic Systems},
(Springer, Berlin, 1973).
{\textsc{doi}}: \href{https://doi.org/10.1007/978-3-642-65374-2}%
     {10.1007/978-3-642-65374-2}

\bibitem{Molchanov2019}
V.~A.~Molchanov and E.~V.~Khvorostukhina,
\emph{On problem of abstract definability of universal hypergraphic automata by
input symbol semigroup},
Chebyshev.\ Sb.\ \textbf{20} (2019), no.~2, 251--264.
{\textsc{doi}}: \href{https://doi.org/10.22405/2226-8383-2019-20-2-251-264}%
     {10.22405/2226-8383-2019-20-2-251-264}

\bibitem{Plotkin1992}
B.~I.~Plotkin, L.~Ja.~Geenglaz, and A.~A.~Gvaramija,
\emph{Algebraic Structures in Automata and Databases Theory},
(World Scientific, Singapore, 1992).

\bibitem{Salo2020}
P.~B{\'e}aur, J.~Kari
\emph{Decidability in group shifts and group cellular automata},
in Proc.\ 45th Int.\ Symp.\ Mathematical Foundations of Computer Science
(MFCS~2020), LIPIcs \textbf{170} (2020), 72:1--72:13.


\bibitem{Vagner1965}
V.~V.~Vagner,
\emph{The theory of relations and the algebra of partial mappings},
in The Theory of Semigroups and Their Applications, Vol.~1,
3--178 (Saratov University, Saratov, 1965).

\bibitem{Zykov1974}
A.~Zykov,
\emph{Hypergraphs},
Uspekhi Mat.\ Nauk \textbf{29} (1974), no.~6, 89--156.  {\textsc{doi}}: \href{https://doi.org/10.1070/RM1974v029n06ABEH001303}{10.1070/RM1974v029n06ABEH001303}.

\end{thebibliography}
\end{document}